\begin{document}

\title{Distributive Power Control Algorithm for Multicarrier Interference Network over Time-Varying Fading Channels -- Tracking Performance Analysis and Optimization}

\newtheorem{Thm}{Theorem}
\newtheorem{Lem}{Lemma}
\newtheorem{Cor}{Corollary}
\newtheorem{Def}{Definition}
\newtheorem{Exam}{Example}
\newtheorem{Alg}{Algorithm}
\newtheorem{Prob}{Problem}
\newtheorem{Rem}{Remark}
\newtheorem{Proof}{Proof}
\newtheorem{Obs}{Observation}
\newtheorem{Con}{Conclusion}
\newtheorem{Ass}{Assumption}

\newcommand{\argmin}[1]{\underset{#1}{\operatorname{argmin}}\;}
\newcommand{\argmax}[1]{\underset{#1}{\operatorname{argmax}}\;}
\newcommand{\maximize}[1]{\underset{#1}{\operatorname{maximize}}\;}
\newcommand{\minimize}[1]{\underset{#1}{\operatorname{minimize}}\;}

\author{\IEEEauthorblockN{Yong Cheng, Vincent K. N. Lau\\}
\authorblockA{Department of Electronic and Computer Engineering \\
The Hong Kong University of Science and Technology}}
\maketitle

\begin{abstract}
Distributed power control over interference limited network has received an increasing intensity of interest over the past few years. Distributed solutions (like the iterative water-filling, gradient projection, etc.) have been intensively investigated under \emph{quasi-static} channels. However, as such distributed solutions involve iterative updating and explicit message passing, it is unrealistic to assume that the wireless channel remains unchanged during the iterations. Unfortunately, the behavior of those distributed solutions under \emph{time-varying} channels is in general unknown.  In this paper, we shall investigate the distributed scaled gradient projection algorithm (DSGPA) in a $K$ pairs multicarrier interference network under a finite-state Markov channel (FSMC) model.  We shall analyze the \emph{convergence property} as well as \emph{tracking performance} of the proposed DSGPA. Our analysis shows that the proposed DSGPA converges to a limit region rather than a single point under the  FSMC model. We also show that the order of growth of the tracking errors is given by $\mathcal{O}\left(1 \big/ \overline{N}\right)$, where $\overline{N}$ is the \emph{average sojourn time} of the FSMC. Based on the analysis, we shall derive the \emph{tracking error optimal scaling matrices} via Markov decision process modeling. We shall show that the tracking error optimal scaling matrices can be implemented distributively at each transmitter. The numerical results show the superior performance of the proposed DSGPA over three baseline schemes, such as the gradient projection algorithm with a constant stepsize.
\end{abstract}

\begin{keywords}
Multicarrier Interference Network, Distributed Power Control, Time-varying Channel, Region Stability, Tracking Error Analysis, Tracking Error Optimization
\end{keywords}

\IEEEpeerreviewmaketitle

\section{Introduction}\label{sec:intro}
Power control algorithm design over interference limited network has received an increasing intensity of interest over the past few years. In \cite{Yu2002}, the power control design over $K$ pairs interference network is formulated as a deterministic non-cooperative game and distributive solution, namely the {\em iterative water-filling} algorithm is proposed to achieve the Nash Equilibrium (NE) of the game. The distributed power control algorithm design in interference network has also been studied in \cite{   Scutari2008_I, Scutari2008II } using game theory, and from the distributed network utility maximization (NUM) point of view \cite{Palomar2006, Chiang2005}. \textcolor{black}{There are also some other works on distributed power allocation in interference networks, such as the asynchronous distributed pricing
(ADP) framework \cite{Huang2005, Huang2006} and the adjoint network\footnote{\textcolor{black}{In the adjoint network approach, the authors \cite{Stanczak2007, Stanczak2008} first derive a computational algorithm to obtain a global optimal solution. The computational algorithm requires global observations across the wireless networks. The authors \cite{Stanczak2007, Stanczak2008} then proposed an efficient mechanism based on adjoint network to distribute the global observations across the nodes.}}  based approach \cite{Stanczak2007, Stanczak2008}.} In all these works, distributive solutions are critical in networks where many transmitter-receiver pairs are randomly placed. In the absence of an infrastructure linking all these nodes, centralized solutions are difficult to implement due to the difficulty of gathering global knowledge of channel state information (CSI) as well as non-scalable limitations in terms of complexity. The implementation of such distributive algorithms, such as the distributed gradient projection algorithm \cite{Scutari2008II}, the iterative water-filling algorithm \cite{Yu2002}, and the gossip algorithm \cite{Boyd2006}, often involves iterative solution with explicit message passing. Moreover, in all these existing works, the convergence and the optimality properties of the algorithms are established under the quasi-static channel assumption. Specifically, the CSI is assumed to be quasi-static throughout the iteration process. However, since there are explicit message passing in between each iterative step, it is quite unrealistic to assume that the channel remains unchanged over a significant number of iterative steps for the algorithm to converge. As a result, it is of great importance both theoretically and practically to investigate the behavior of the distributive algorithms \cite{Yu2002, Scutari2008_I , Palomar2006, Chiang2005, Scutari2008II, Boyd2006, Huang2005, Huang2006, Stanczak2007, Stanczak2008} under time-varying channels. In this paper, we are interested in a network topology consisting of $K$ transmit-receive pairs sharing a common spectrum with $N_{F}$ independent subbands. For a distributive power control algorithm in such interference networks over time-varying channels, the NE, which is a function of the CSI, will be time-varying as well. Hence, the following are some important questions, which still require more investigations.
\begin{itemize}
\item \textcolor{black}{How fast could the wireless channel change before the distributive algorithm failed to track the {\em moving} NE?}
\item Given that the distributed algorithms can track the moving NE of the interference network, can we obtain closed-form bounds on the {\em tracking errors}?
\item How to enhance the existing distributive algorithm (designed for quasi-static CSI) to optimize the associated \emph{tracking performance}?
 \end{itemize}

Due to the randomness of the wireless channel and the nonlinear dynamics of the iteration process encountered, it is \emph{highly nontrivial} to answer the above questions in general.
There are some preliminary works on distributed power control in time-varying interference networks. For instance, in \cite{Alpcan2004}, the \emph{hybrid system} model was used to study the multicell CDMA interference game. While the existing works \cite{Alpcan2004} provide some preliminary results on the behavior of the distributed power control algorithm under time-varying channels, the techniques cannot be utilized in the \emph{vector} interference game that we are considering. For instance, the authors of \cite{Alpcan2004} established the region stability\footnote{A nonlinear dynamic system is said to be region stable if the trajectory of the system converges globally asymptotically to a limit region \cite{Podelski_Proof}.} of the {\em gradient play} of a multicell CDMA interference game. However, their results cannot be utilized in our case because they did not consider the transmit power constraint. Furthermore, to our best knowledge, none of the existing works have investigated the closed-form tracking error expressions as well as enhancing the distributive algorithm to optimize the associated tracking performance. On the other hand, parameter tracking based on the  linear regression model and least mean square (LMS) algorithm in nonstationary environments was investigated in \cite{Costa2007, Yin2005}. Also, the authors of \cite{Berenguer2005, Krishnamurthy2005}  studied the problem of tracking the optima of discrete stochastic optimization  in time-varying scenarios. However, the techniques and results of \cite{Costa2007, Yin2005, Berenguer2005, Krishnamurthy2005}, which are based on the special structure of the underlying dynamics, cannot be applied to the \emph{distributed gradient projection algorithm} we are considering.

In this paper, we shall shed some lights on the above open questions. We shall model the transient of the distributive power control algorithm in the $K$-pair interference network as an {\em algorithm trajectory} of a nonlinear system \cite{Liberzon2003}. Based on {\em randomly switched system} modeling, we first establish the \emph{region stability} and the technical conditions for the convergence of the distributive power control algorithm under a finite-state Markov channel (FSMC) model. Based on that, we shall derive closed-form order of growth of the {\em tracking errors}, namely the {\em expected-absolute-error} (EAE) as well as the mean-square-error (MSE), between the algorithm trajectory and the moving NE.  Based on these results, we shall enhance the tracking performance of the distributive power control algorithm using a novel {\em Scaling Matrix Optimization}. Specifically, we shall determine the closed-form optimal {\em scaling matrices} in the iterative update of the power control algorithm so as to minimize a general function  of the tracking errors. The optimal scaling matrix is adaptive to the current CSI and can be computed distributively at each transmitter based on local CSI only. As a result, the solution could be implemented with low complexity.

The paper is organized as follows. In Section \ref{sec:sys_mod}, we introduce the interference network model, FSMC model as well as the
formulation of the power control game. In Section \ref{sec:region_stability}, we elaborate the  distributed scaled gradient projection algorithm and analyze its convergence behavior as well as tracking errors via switched system modeling. In Section \ref{sec:mse_opt}, we shall first introduce a dominating error process and we shall then formulate the tracking error minimization problem via MDP modeling and derive the optimal solution of the scaling matrices. Section \ref{sec:num_res_disc} demonstrates the tracking performance of the
proposed algorithm and verifies the analytical results by simulations. Finally, we conclude with a summary of the main results in Section \ref{sec:conclusion}.

{\em Notations}: Matrix and vectors are denoted with capitalized and small boldface letters, respectively. $\mathbf{A}^{T}$ and $\mathbf{A}^{*}$ denote the transpose and complex conjugate of matrix $\mathbf{A}$, respectively. $\lambda_{max}\left(\mathbf{A}\right)$ and $\lambda_{min}\left(\mathbf{A}\right)$ denote the largest eigenvalue and smallest eigenvalue of matrix $\mathbf{A}$, respectively. $\left[\mathbf{A}\right]_{lm}$ denotes the $(l, m)^{th}$ entry of matrix $\mathbf{A}$, and $\mathbf{I}_{N_{F}}$ denotes the $N_{F} \times N_{F}$ identity matrix.  $\mathbb{C}$, $\mathbb{R}$ and $\mathbb{R}_{+}$ denote the set of complex numbers, real numbers and non-negative real numbers, respectively.  $\mathbb{E}$ denotes the operation of taking expectation, $\bigotimes$ denotes the operation of Cartesian product, and $Pr\left\{ \psi\right\}$ denotes the probability of set  $\psi$. Finally, $\mathbf{1}$ denotes a column vector of all ones with appropriate dimension.

\section{System Model}\label{sec:sys_mod}
In this section, we shall introduce the time-varying $K$-pair multicarrier interference network model as well as the formulation of the deterministic non-cooperative power control game.

\subsection{$K$-pair Multicarrier Interference Network Model}\label{subsec:ch_mod}
Consider a time-varying multicarrier interference network with $K$ transmitter-receiver pairs sharing $N_{F}$ non-overlapping subcarriers, as shown in Fig. \ref{fig:system_model}. Denoting $\mathcal{K}$ and $\mathcal{S}$ as the set of transmitters and subcarriers respectively, the baseband signal model at the $n^{th}$ time-slot can be written as
\begin{IEEEeqnarray}{rCl} \label{eqn:signal_model}
y_{k}^{(s)}(n) = \sum_{j =
1}^{K}h_{kj}^{(s)}(n)\sqrt{p_{j}^{(s)}(n)}d_{j}^{(s)}(n) + z_{k}^{(s)}(n),
\forall~ s \in \mathcal{S}, \forall~ k \in \mathcal{K},
\end{IEEEeqnarray}
where
\begin{itemize}
\item[--] $y_{k}^{(s)}(n)$  denotes the received signal at the $k^{th}$ receiver on  the $s^{th}$ subcarrier;
\item[--] $h_{kj}^{(s)}(n)$ denotes the channel coefficient between the $k^{th}$
receiver and the $j^{th}$ transmitter on  the $s^{th}$ subcarrier;
\item[--] $p_{j}^{(s)}(n)$ denotes the transmit power of the $j^{th}$ transmitter on  the $s^{th}$ subcarrier;
\item[--] \textcolor{black}{$d_{j}^{(s)}(n)$} denotes the transmitted
symbol of the $j^{th}$ transmitter on  the $s^{th}$ subcarrier, with normalized power, i.e.,
$\mathbb{E}\left \{d_{j}^{(s)}(n)d_{j}^{(s)*}(n)\right\}=1$;
\item[--] $z_{k}^{(s)}(n)$ denotes the additive white Gaussian
noise at the  $k^{th}$ receiver on the $s^{th}$ subcarrier, the power of which is given by
$\mathbb{E}\left\{z_{k}^{(s)}(n)z_{k}^{(s)*}(n)
\right\}=\sigma^{2}$.
\end{itemize}

Based on the signal model given in equation \eqref{eqn:signal_model}, the instantaneous mutual information (in nats per channel use) of the $k^{th}$ link at time-slot $n$ can be written as
\begin{IEEEeqnarray}{rCl} \label{eqn:Cin}
C_{k}(n) = \sum_{s = 1}^{N_{F}}\log\left(1 + \gamma_{k}^{(s)}(n)\right), \forall~ k \in \mathcal{K},
\end{IEEEeqnarray}
where $\gamma_{k}^{(s)}(n)$ denotes the receiving signal to interference plus noise ratio (SINR) at the $k^{th}$ receiver on the $s^{th}$ subcarrier, which is given by
\begin{IEEEeqnarray}{rCl} \label{eqn:sinr_model}
\gamma_{k}^{(s)}(n) = \frac{g_{kk}^{(s)}(n)p_{k}^{(s)}(n)}{\sigma^{2} +
\sum_{j = 1, j\neq k}^{K} g_{kj}^{(s)}(n)p_{j}^{(s)}(n)  }, \forall~ k \in \mathcal{K}, \forall~  s \in \mathcal{S},
\end{IEEEeqnarray}
where $g_{kj}^{(s)}(n) = \left|h_{kj}^{(s)}(n)\right|^2$ is the power gain of the fading channel  coefficient $h_{kj}^{(s)}(n)$.

\subsection{Finite-State Markov Channel Model} \label{subsec:fsmc}
Motivated by the accuracy and simplicity of the FSMC model \cite{Huang2009, Zhang2009, YIN2008, Babich2000, Zhang2000, Wang1995} for time-varying channels\footnote{\textcolor{black}{The current model of time-varying finite state Markov Fading Channel (FSMC) is a very commonly accepted model which has been widely used in a lot of literature such as \cite{Huang2009, Zhang2009, YIN2008, Babich2000, Zhang2000, Wang1995} to model the time-varying fading channels. This model is not too complicated so that it is analytically tractable and it is complicated enough to give us some first order insights. The results in this paper can be extended to the case with continuous state space but the extension to continuous state space Markov fading channel involves some  mathematical technicality. The extension  can be considered as the limit of a sequence of finite state Markov Chain (FSMC) models \cite{Diaconis1997, Wilkinson2006}. }}, we model the channel process $\left\{h_{kj}^{(s)}(n)\right\}$ as an ergodic finite-state Markov chain, $\forall~ k, j \in \mathcal{K}, s \in \mathcal{S}$.  Let the state space of the FSMC $\left\{h_{kj}^{(s)}(n) \right\}$  be $\mathcal{\widetilde{H}}_{kj}^{(s)}$, with cardinality $\widetilde{Q}_{kj}^{(s)}$, $\forall~ k, j \in \mathcal{K}, s \in \mathcal{S}$. For the FSMC $\left\{h_{kj}^{(s)}(n) \right\}$, we make the following assumptions.
\begin{Ass} [Assumptions on the FSMC] Similar to \cite{Zhang2000, Wang1995},  the transition probability matrix $\mathbf{T}_{kj}^{(s)} \in \textcolor{black}{\mathbb{R}_{+}^{ \widetilde{Q}_{kj}^{(s)} \times \widetilde{Q}_{kj}^{(s)} }}$ of the FSMC $\left\{h_{kj}^{(s)}(n)\right\}$ is assumed to have the following structure: \label{ass:assumptions_fsmc}
\begin{IEEEeqnarray}{rCl} \label{eqn:TPM}
\begin{array}{ccc}
\mathbf{T}_{kj}^{(s)} & =  &  \left[ \begin{array}{cccccccc} \nu & \varepsilon & 0 &  0 &  0 & \cdots & 0 & \varepsilon\\
\varepsilon & \nu & \varepsilon &  0 & 0 & \cdots & 0  & 0 \\
0 & \varepsilon & \nu &  \varepsilon &  0 & \cdots & 0 & 0 \\
\vdots & \vdots & \vdots & \vdots & \vdots & \ddots & \vdots & 0\\
\varepsilon & 0 & 0 &  0 &  0 & \cdots &  \varepsilon & \nu
\end{array} \right],
\end{array}
\end{IEEEeqnarray}
where $\nu = 1 - 2\varepsilon$ and $\varepsilon = \mathcal{O}\left(f_{d}\tau\right)$, with $f_{d}$ and $\tau$ denoting the doppler frequency shift and symbol duration, respectively.  ~\hfill\IEEEQEDclosed
\end{Ass}

Let $\mathbf{h}(n) \in \mathbb{C}^{K^{2}N_{F} \times 1}$ denote the collection of all the fading channel coefficients\footnote{\textcolor{black}{Here, we assume independent sub-channels for simplicity. However, all the results stated in this paper hold for correlated sub-channels as well.}}, i.e.,
\begin{IEEEeqnarray}{lCl} \label{eqn:hn}
\mathbf{h}(n) = \left\{h_{kj}^{(s)}(n),~ \forall~ (k,j,s) \in \mathcal{K} \otimes \mathcal{K} \otimes \mathcal{S}\right\} \in \mathbb{C}^{K^{2}N_{F} \times 1},
\end{IEEEeqnarray}
then $\left\{\mathbf{h}(n)\right\}$ is also an ergodic finite-state Markov chain \cite{YIN2008}. The \emph{state space} $\mathcal{H}$ and \emph{transition probability matrix} $\mathbf{T}$ of the FSMC $\left\{\mathbf{h}(n)\right\}$ are given by \cite{YIN2008}
\begin{IEEEeqnarray}{lCl} \label{eqn:hn}
\mathcal{H} = \bigotimes_{\{(k, j, s) \in \mathcal{K} \otimes \mathcal{K} \otimes \mathcal{S} \}} \mathcal{\widetilde{H}}_{kj}^{(s)},  \mbox{ and }~
\mathbf{T} = \bigotimes_{\{(k, j, s) \in \mathcal{K} \otimes \mathcal{K} \otimes \mathcal{S} \}} \mathbf{T}_{kj}^{(s)} \in \mathbb{R}_{+}^{Q \times Q},
\end{IEEEeqnarray}
where  $Q = \left| \mathcal{H}\right| = \prod_{\{(k, j, s) \in \mathcal{K} \otimes \mathcal{K} \otimes \mathcal{S} \}} \widetilde{Q}_{kj}^{(s)}$  is the cardinality of the state space $\mathcal{H} = \big\{ \mathbf{h}_{1}, \mathbf{h}_{2}, \cdots,$ $\mathbf{h}_{Q} \big\}$ of the FMSC $\left\{\mathbf{h}(n)\right\}$. \textcolor{black}{For notational convenience, we use $q \in \mathcal{Q} = \{1, 2, \cdots, Q\}$ as an index to enumerate the state realization of the CSI $\left\{\mathbf{h}(n) \in \mathcal{H}\right\}$.}

\subsection{Power Control Game in the $K$-pair Interference Network} \label{subsec:game_theoretic}
In the strategic noncooperative game formulation of the distributed power control problem in quasi-static interference networks, the players are
the $K$ active links and the payoff functions are the instantaneous mutual information  of the active links \cite{Scutari2008_I, Scutari2008II}.
As a result, for each realization of the CSI $\{\mathbf{h}(n)\}$, the power control game formulation\footnote{\textcolor{black}{The non-cooperative game considered in this paper serves more like a motivating example to study the issue of convergence of iterative algorithms under time-varying CSI.  The approach considered in this paper can also be applied to other contraction-based iterative algorithms as well\cite{Cheng2010}. }} is summarized below.
\begin{Prob} [Multicarrier Interference Game] At each time-slot $n$, given the transmit power profile $\mathbf{p}_{-k}(n)$
of the other players, the $k^{th}$ player tries to maximizes its own payoff function via solving the following capacity maximization problem: \label{prob:game}
\begin{IEEEeqnarray}{rCl} \label{eqn:game_formulation}
\begin{array}{cccc}
\left(\mathcal{G}\right): & \begin{array}{cc}
 \maximize{\mathbf{p}_{k}(n)} \\ \mbox{subject to}  \end{array}  & \begin{array}{ll} C_{k}(\mathbf{p}_{k}(n), \mathbf{p}_{-k}(n))\\
 \mathbf{p}_{k}(n) \in \Omega_{k}
\end{array}, &  \forall~ k \in \mathcal{K},
 \end{array}
\end{IEEEeqnarray}
\end{Prob}
where
\begin{itemize}
\item[--] $C_{k}(\mathbf{p}_{k}(n), \mathbf{p}_{-k}(n)) = C_{k}(n)$ is
the instantaneous capacity of the $k^{th}$ link given in \eqref{eqn:Cin};
\item[--] $\mathbf{p}_{k}(n)$ denotes the transmit power profile of the $k^{th}$ player, i.e.,
\begin{IEEEeqnarray}{rCl} \label{eqn:p_k}
 \mathbf{p}_{k}(n) = \left[p_{k}^{(1)}(n) \:\ p_{k}^{(2)}(n) \:\ \cdots \:\ p_{k}^{(N_{F})}(n) \right]^{T}, \forall~ k \in \mathcal{K};
 \end{IEEEeqnarray}
\item[--] $\mathbf{p}_{-k}(n)$ denotes the transmit power profile of all players excluding the $k^{th}$ player, i.e.,
\begin{IEEEeqnarray}{rCl} \label{eqn:p_exi}
\mathbf{p}_{-k}(n) = \left[ \mathbf{p}_{1}^{T}(n) \:\
\mathbf{p}_{2}^{T}(n) \:\ \cdots \:\ \mathbf{p}_{k-1}^{T}(n)
\:\ \mathbf{p}_{k+1}^{T}(n) \cdots \:\ \mathbf{p}_{K}^{T}(n)
\right]^{T}, \forall~ k \in \mathcal{K};
\end{IEEEeqnarray}
\item[--] the strategy set $\Omega_{k}$ of the $k^{th}$ player is given by
\begin{IEEEeqnarray}{rCl} \label{eqn:omega_k}
\Omega_{k} = \left\{\mathbf{p}_{k} \bigg | \mathbf{p}_{k} \in \mathbb{R}_{+}^{N_{F}}, \mathbf{1}^{T}\mathbf{p}_{k} \leq P_{k,max}, ~\mathbf{p}_{k} \succeq 0  \right\}, \forall~ k \in \mathcal{K},
\end{IEEEeqnarray}
with  $P_{k, max}$ denoting the maximum power budget at the $k^{th}$ transmitter.
\end{itemize}

One widely adopted optimality criterion for the game formulation given in \eqref{eqn:game_formulation} is the achievement of a Nash Equilibrium (NE) \cite{   Scutari2008_I}, which is formally  defined below.
\begin{Def} A set of (pure) strategies is a Nash Equilibrium of {\em Game $\mathcal{G}$} if no player can benefit by unilaterally changing its strategy. Mathematically, a (pure) strategy profile $\mathbf{p}^{*}(n) = \left(\mathbf{p}_{k}^{*}(n), \mathbf{p}_{-k}^{*}(n)\right)$ is a NE of  {\em Game $\mathcal{G}$} if
\begin{IEEEeqnarray}{rCl} \label{eqn:NE}
C_{k}\left(\mathbf{p}_{k}^{*}(n), \mathbf{p}_{-k}^{*}(n)\right) \geq C_{k}\left(\mathbf{p}_{k}(n), \mathbf{p}_{-k}^{*}(n)\right), ~\forall~ \mathbf{p}_{k}(n) \in \Omega_{k}, ~\forall~ k \in \mathcal{K}.
\end{IEEEeqnarray}
\end{Def}

In quasi-static interference networks, the iterative water-filling algorithm (IWFA) as well as the gradient projection algorithm (GPA) were proposed to solve {\em Game $\mathcal{G}$} \cite{Yu2002,    Scutari2008_I , Scutari2008II}. The convergence of these algorithms to the \emph{static} NE were established based on the \emph{contraction mapping} theory \cite{Bertsekas1989}, i.e., the IWFA and GPA were shown to be contraction mapping \cite{Yu2002, Scutari2008_I , Scutari2008II}. Moreover, for each update of the power allocation vector $\mathbf{p}_{k}(n)$, the  $k^{th}$ transmitter requires the feedback of local interference level at the $k^{th}$ receiver. As a result, one cannot assume that the CSI remains unchanged for many iterative updates. However, the convergence behavior of the IWFA and GPA under time-varying channels is in general unknown.

Throughout this paper, we assume that the channel coefficients $\left\{h_{kk}^{(s)}, \forall~ s \in \mathcal{S}\right\}$ are known perfectly at both ends of the $k^{th}$ link, and the $k^{th}$ receiver measures local total received power $\big\{ \sigma^{2} +
\sum_{j = 1}^{K} g_{kj}^{(s)}(n)p_{j}^{(s)}(n)$, $\forall~ s\ \in \mathcal{S}   \big\}$ and feeds back this information to the  $k^{th}$ transmitter at the end of each time-slot, $\forall~ k \in \mathcal{K}$. \textcolor{black}{Table \ref{table:notations} summarizes the main notations used in the paper.}

\section{Switched System Modeling and Convergence Behavior Analysis}\label{sec:region_stability}
In general, the transient behavior of an iterative algorithm can be characterized by an {\em algorithm trajectory} of an associated nonlinear dynamic system and the NE is the associated \emph{equilibrium point} of the nonlinear dynamic system. Since the NE is a function of the CSI, a quasi-static CSI corresponds to a nonlinear system with \emph{static} equilibrium point. The convergence behavior of the iterative algorithm can be visualized as the algorithm trajectory converging to the equilibrium point as illustrated in Fig. \ref{fig:trajectory}. On the other hand, time-varying random CSI corresponds to a randomly moving NE (or a randomly moving equilibrium point) and the convergence behavior of the algorithm can be visualized as how well the algorithm trajectory could track the moving equilibrium point of the nonlinear system. In this section, we shall first propose a novel distributive {\em scaled gradient} power control algorithm for the interference {\em Game $\mathcal{G}$} under time-varying CSI. Based on this, we shall utilize the {\em randomly switched system} nonlinear control theory to analyze the convergence behavior as well as the tracking errors of the proposed scaled gradient projection algorithm.

\subsection{Distributed Scaled Gradient Projection Algorithm}\label{subsec:gradient_play}
The existence and uniqueness of NE of {\em Game $\mathcal{G}$} under \emph{quasi-static} channels has been extensively studied (e.g., see \cite{   Scutari2008_I, Scutari2008II , Yu2002} and references therein.). The solution set of {\em Game $\mathcal{G}$} is nonempty under any channel conditions, while the uniqueness of NE depends on the channel coefficients of the whole network \cite{Scutari2008_I , Yu2002}. As a result, with the time-varying channels, the NE of {\em Game $\mathcal{G}$} changes with time. We thus propose a  distributed scaled gradient projection algorithm (DSGPA) to \emph{track the moving NE} of {\em Game $\mathcal{G}$} in the time-varying interference networks.

Different from the gradient projection algorithms proposed in \cite{Scutari2008II, Xi2008} under quasi-static channels, where the scaling matrices are constant, the \emph{scaling matrices} in our proposed DSGPA can be adaptive to the CSI of the current update. Moreover, the scaling matrices could be optimized to  minimize the \emph{tracking error}, which will be detailed in section \ref{sec:mse_opt}.  Before summarizing the main algorithm, we first form the Lagrangian \cite{Boyd2004} of the optimization problem \eqref{eqn:game_formulation}:
\begin{IEEEeqnarray}{rCl} \label{eqn:LMk}
\mathcal{L}_{k}\left(\mathbf{p}_{k}(n, \lambda_{k}(n)\right) = C_{k}(\mathbf{p}_{k}(n), \mathbf{p}_{-k}(n)) + \lambda_{k}(n)\left(P_{k,max} - \mathbf{1}^{T}\mathbf{p}_{k}(n)\right), \forall k \in \mathcal{K},
\end{IEEEeqnarray}
where $\lambda_{k}(n) \geq 0$ is Lagrangian multiplier associated with the sum-power constraint at the $k^{th}$ transmitter. As has been established in \cite{Arrow1958}, solving the concave programming defined in equation \eqref{eqn:game_formulation}  and finding the saddle point\footnote{ A point $\left(\mathbf{p}_{k}^{*}(n), \lambda_{k}^{*}(n)\right)$ is a saddle point of the Lagrangian
$\mathcal{L}_{k}\left(\mathbf{p}_{k}(n), \lambda_{k}(n)\right)$ if
\begin{IEEEeqnarray}{rCl} \label{eqn:saddle_point_def}
\mathcal{L}_{k}\left(\mathbf{p}_{k}(n), \lambda_{k}^{*}(n)\right) \leq \mathcal{L}_{k}\left(\mathbf{p}_{k}^{*}(n), \lambda_{k}^{*}(n)\right) \leq \mathcal{L}_{k}\left(\mathbf{p}_{k}^{*}(n), \lambda_{k}(n)\right).\end{IEEEeqnarray}
} of the Lagrangian
$\mathcal{L}_{k}\left(\mathbf{p}_{k}(n), \lambda_{k}(n)\right)$ is equivalent. We thus adopt the \emph{primal-dual gradient method} \cite{Arrow1958} to find the saddle point of the Lagrangian $\mathcal{L}_{k}\left(\mathbf{p}_{k}(n, \lambda_{k}(n)\right)$. The proposed DSGPA is summarized in {\em Algorithm \ref{alg:sgpa}}.
\begin{algorithm}
\caption{Distributed Scaled Gradient Projection Algorithm (DSGPA)}
\label{alg:sgpa}
\begin{itemize}
\item \textbf{Initialization}: \\
Set $\mathbf{p}_{k}(0)$ to be any feasible power allocation vector;
$\lambda_{k}(0)$ to be any positive number; and the scaling matrix
$\mathbf{D}_{k}(0)$ to be $\mathbf{I}_{N_{F}}$, $\forall~  k \in
\mathcal{K}$;

\item \textbf{Updating the Lagrangian Multiplier $\lambda_{k}(n)$}:\\
\emph{At each time-slot $n$} ($n\geq 1$), after receiving the \emph{received power profile} $\bm{\rho}_{k}(n)$ (see equation \eqref{eqn:rpwr} below) fed back by the ${k^{th}}$ receiver, the ${k^{th}}$ transmitter
updates its Lagrangian Multiplier $\lambda_{k}(n) \in
\mathbb{R}_{+}$ according to
\begin{IEEEeqnarray}{rCl} \label{eqn:alg_LMk}
\lambda_{k}(n+1) =  \left[ \lambda_{k}(n) -
\alpha\left(P_{k,max}-\mathbf{1}^{T}\mathbf{p}_{k}(n)\right)\right]^{+};
\end{IEEEeqnarray}
\item \textbf{Updating the power allocation vector $\mathbf{p}_{k}(n)$}: \\
The ${k^{th}}$ transmitter updates its \emph{power allocation
vector} $\mathbf{p}_{k}(n) \in \mathbb{R}_{+}^{N_{F}} $ according to
\begin{IEEEeqnarray}{rCl} \label{eqn:update_pin}
\mathbf{p}_{k}(n+1) =  \left[ \mathbf{p}_{k}(n) +
\mathbf{D}_{k}^{-1}(n)\mathbf{f}_{k}(n) \right]^{+}.
\end{IEEEeqnarray}
\end{itemize}
\end{algorithm}

In {\em Algorithm \ref{alg:sgpa}}:
\begin{itemize}
\item[--]\textcolor{black}{the $k^{th}$ transmitter needs the channel gain $\left\{ h_{kk}^{(s)}, \forall s \in \mathcal{S}\right\}$ feedback from its intended receiver $k$};
\item[--] the \emph{received power profile} $\bm\rho_{k}(n)$ is given by
\begin{IEEEeqnarray}{rCl} \label{eqn:rpwr}
\bm\rho_{k}(n) = \left[\rho_{k}^{(1)}(n) \:\ \rho_{k}^{(2)}(n) \:\ \cdots \:\ \rho_{k}^{(N_{F})}(n)\right]^{T}, \forall~ k \in \mathcal{K},
\end{IEEEeqnarray}
with $\rho_{k}^{(s)}(n) = \sigma^{2} +
\sum_{j = 1}^{K} g_{kj}^{(s)}(n)p_{j}^{(s)}(n)$ denoting the total received power on the $s^{th}$ subcarrier  at the $k^{th}$ receiver, $\forall~ s \in \mathcal{S}$, $k \in \mathcal{K}$;
\item[--] $\alpha$ is a stepsize, which can be computed efficiently via the {\em exact line search} method \cite{Boyd2004}, as the Lagrangian given by \eqref{eqn:LMk} is linear in $\lambda_{k}(n)$;
\item[--] $\mathbf{D}_{k}^{-1}(n)$ denotes the \emph{symmetric positive definite} scaling matrix, defined in equation \eqref{eqn:Di_soln}; \textcolor{black}{to facilitate distributive computation of the dynamic scaling matrices at each node, we impose a block diagonal structure\footnote{\textcolor{black}{We shall illustrate in Section \ref{sec:num_res_disc} that there is only very small performance penalty associated with the block diagonal scaling matrix.}}  on the scaling matrix $\mathbf{D}_{k}^{-1}(n)$;}
\item[--] $\mathbf{f}_{k}(n)$ is defined as the gradient of $\mathcal{L}_{k}\left(\mathbf{p}_{k}(n, \lambda_{k}(n)\right)$ w.r.t. $\mathbf{p}_{k}(n)$ evaluated at $\mathbf{p}(n)$, i.e.,
    \begin{IEEEeqnarray}{rCl} \label{eqn:gradient}
  \mathbf{f}_{k}(n) \triangleq \left[ \frac{g_{kk}^{(1)}(n)}{\rho_{k}^{(1)}(n)} \:\ \frac{g_{kk}^{(2)}(n)}{\rho_{k}^{(2)}(n)} \:\ \cdots \:\ \frac{g_{kk}^{(N_{F})}(n)}{\rho_{k}^{(N_{F})}(n)}\right]^{T} - \lambda_{k}(n+1)\mathbf{1};
\end{IEEEeqnarray}
\item[--] $\left[\mathbf{a}\right]^{+} = \max\{\mathbf{a}, \mathbf{0}\}$, which shall be understood componentwisely.
\end{itemize}
\begin{Rem}
In {\em Algorithm \ref{alg:sgpa}}, at each time slot $n$, the ${k^{th}}$ transmitter updates its power allocation vector based on the \emph{local information} $\left\{g_{kk}^{(s)}(n), \forall~ s \in \mathcal{K}\right\}$, $\mathbf{p}_{k}(n)$, and the \emph{local} receive power profile $\bm\rho_{k}(n)$ of the $k^{th}$ receiver. Therefore, {\em algorithm \ref{alg:sgpa}} can be implemented distributively, without requiring any global information of the network.
\end{Rem}


\begin{Rem}
As the FSMC $\left\{\mathbf{h}(n)\right\}$ jumps from one state to another randomly, the NE of {\em Game $\mathcal{G}$} also changes with time randomly. As a result, {\em Algorithm \ref{alg:sgpa}} would not converge to a single point but rather designed to \emph{track} the moving NE of {\em Game $\mathcal{G}$}. The \emph{tracking performance} of {\em Algorithm \ref{alg:sgpa}} is the focus of the rest of this paper.
\end{Rem}

\subsection{Randomly Switched System Modeling}\label{subsec:nash_equilibria}
Randomly switched systems are \emph{piecewise deterministic} stochastic systems, i.e., between any two consecutive \emph{switching instants}, the dynamics are deterministic \cite{Liberzon2003}. Formally, a \emph{discrete-time randomly switched system} is defined as follows \cite{Liberzon2003}.

\begin{Def} [Discrete-time Randomly Switched System] A Discrete-time  \emph{Randomly switched system} consists of a family of \emph{subsystems}, and a random \emph{switching signal} that specifies the \emph{active} subsystem at every time-slot. Mathematically,
\begin{IEEEeqnarray}{rCl} \label{eqn:Def_S}
\mathbf{x}(n + 1) = \mathscr{F}_{u}\left( \mathbf{x}(n) \right), \mbox{when }\tau\left(n\right) = u \in \mathcal{U} = \{1, 2, \cdots, U\},
\end{IEEEeqnarray}
where $\mathbf{x}(n)$ denotes the system state; $\mathscr{F}_{u}\left( \mathbf{x}(n) \right)$ denotes the $u^{th}$ \emph{subsystem}; and $\tau\left(n\right)$ is the switching signal with state space $\mathcal{U}$. ~\hfill\IEEEQEDclosed
\end{Def}

For the FSMC $\left\{\mathbf{h}(n)\right\}$, the channel process stays in a state $\mathbf{h}_{q}$ for a {\em random sojourn time} of $N_{q}$ time-slots, and then jumps to another state randomly. During the $N_{q}$ time-slots, the channel coefficients $\left\{\mathbf{h}(n)\right\}$ remain constant (i.e., $\mathbf{h}(n) = \mathbf{h}_{q}$) and the system is deterministic. We thus can model the time-varying interference network embedded with the dynamics of the proposed DSGPA as a \emph{randomly switched system}, with the FSMC $\left\{\mathbf{h}(n)\right\}$ being the \emph{switching signal} and the channel state $\mathbf{h}_{q}$ corresponding to the $q^{th}$ \emph{subsystem} \cite{Liberzon2003}, for all $q \in \mathcal{Q}$.

To obtain the dynamics of the randomly switched system model of the power control game explicitly,  we rewrite the $K$ block-component iterations given by equation \eqref{eqn:update_pin}   in {\em Algorithm \ref{alg:sgpa}} into one vector form:
\begin{IEEEeqnarray}{rCl} \label{eqn:system_iteration}
\mathbf{p}(n+1) = \left[ \mathbf{p}(n) + \mathbf{D}^{-1}(n)
\mathbf{f}(n) \right]^{+}  \triangleq \mathscr{T}_{q}\left(\mathbf{p}(n)\right), \mbox{when } \mathbf{h}(n) = \mathbf{h}_{q} \in \mathcal{H},
\end{IEEEeqnarray}
where
\begin{itemize}
\item[--] $\mathbf{p}(n)$ denotes the transmit power profile of all the players, i.e.,
\begin{IEEEeqnarray}{rCl} \label{eqn:pn}
\mathbf{p}(n) = \left[ \mathbf{p}_{1}^{T}(n) \:\
\mathbf{p}_{2}^{T}(n) \:\ \cdots \:\ \mathbf{p}_{K}^{T}(n)
\right]^{T};
\end{IEEEeqnarray}
\item[--] $\mathbf{D}^{-1}(n) = blkdlg\left\{ \mathbf{D}_{1}^{-1}(n),
\mathbf{D}_{2}^{-1}(n), \cdots, \mathbf{D}_{K}^{-1}(n)  \right\} \in \mathbb{C}^{KN_{F} \times KN_{F}}$ denotes the\emph{ block-diagonal} scaling matrix consisting of the $K$ scaling matrices
$\left\{ \mathbf{D}_{k}^{-1}(n), \forall~ k \in \mathcal{K} \right\}$;
\item[--] $\mathbf{f}(n) \in \mathcal{R}^{KN_{F}}$ denotes the collection of the $K$ gradient functions $\left\{\mathbf{f}_{k}(n), \forall~ k \in \mathcal{K}\right\}$, i.e.,
\begin{IEEEeqnarray}{rCl} \label{eqn:gradient_collection}
\mathbf{f}(n) = \left[\mathbf{f}_{1}^{T}(n) \:\ \mathbf{f}_{2}^{T}(n) \:\ \cdots \:\ \mathbf{f}_{K}^{T}(n) \right]^{T};
\end{IEEEeqnarray}
\item[--] $\mathscr{T}_{q}\left(\mathbf{p}(n)\right)$ denotes the dynamics of the $q^{th}$ subsystem, $\forall~ q \in \mathcal{Q}$.
\end{itemize}
\begin{Rem}
In the above randomly switched system modeling,  $\mathbf{p}(n)$ is the system state vector, and the $Q$ states of the FMSC $\left\{\mathbf{h}(n)\right\}$ correspond to the $Q$ \emph{subsystems} $\big\{\mathscr{T}_{1}\big(\mathbf{p}(n)\big), \mathscr{T}_{2}\big(\mathbf{p}(n)\big),$ $\cdots, \mathscr{T}_{Q}\big(\mathbf{p}(n)\big) \big\}$. As the FMSC $\left\{\mathbf{h}(n)\right\}$ jumps between different states, the switched system \eqref{eqn:system_iteration} switches between different subsystems.
\end{Rem}

To simplify the analysis in the sequel, we make the following assumptions.
\begin{Ass} [Existence of NE] \label{ass:unique_NE} We assume that the interference {\em Game $\mathcal{G}$ } in \eqref{eqn:game_formulation} has \textcolor{black}{a} unique NE for all $q \in \mathcal{Q}$. In other words, using the sufficient conditions for the existence and uniqueness of NE of {\em Game $\mathcal{G}$ } given in \cite{Scutari2008_I, Yu2002}, we assume that
\begin{IEEEeqnarray}{rCl} \label{eqn:sufficient_NE}
\max_{s \in \mathcal{S}} \frac{g_{kj}^{(s)}(n)P_{j, max}}{g_{kk}^{(s)}(n)P_{k, max}} < \frac{1}{K-1}, \forall~ k, j \in \mathcal{K}, k \neq j.
\end{IEEEeqnarray}
\end{Ass}

\begin{Rem}
\textcolor{black}{Under {\em Assumption \ref{ass:unique_NE}}, the proposed distributed scaled gradient projection algorithm (DSGPA) converges linearly for each channel state \cite{Scutari2008_I}. In other words, the proposed DSGPA converges linearly if the channel is static (cardinality of the CSI space $\left|\mathcal{H}_{kj}^{(s)}\right| = \widetilde{Q}_{kj}^{(s)}=1$). }
\end{Rem}

\subsection{Convergence Analysis of the Proposed DSGPA}\label{subsec:region_convergence}
The \emph{region stability} is a widely used performance measure of iterative algorithms in time-varying environments, especially for switched and hybrid systems \cite{Podelski_Proof, Alpcan2004}.
When the CSI is time-varying (e.g., the FSMC model), the equilibrium point of the system (e.g., the NE) is also changing and hence, the {\em algorithm trajectory} of the iterative algorithm will not converge to a single point but rather a limit region as illustrated in Fig. \ref{fig:region_stability}. We shall formally define  \emph{region stability} below.
\begin{Def} [Region Stability of Switched Systems] \label{def:region_stability}
A discrete-time randomly switched system  with state vector
$\mathbf{p}(n)$ is said to be stable w.r.t. a \emph{limit region $\mathcal{L}$}, if for every trajectory $\mathbf{p}\left(n, \mathbf{p}(0)\right)$, there exists a point of
time $N_{0}\left(\mathbf{p}(0)\right)$ such that from then on, the
trajectory is always in the limit region $\mathcal{L}$. Mathematically,
\begin{IEEEeqnarray}{rCl} \label{eqn:def_region}
\forall~ \mathbf{p}\left(n, \mathbf{p}(0)\right), ~ \exists~ N_{0}\left(\mathbf{p}(0)\right) \mbox{ such that }
\mathbf{p}\left(n, \mathbf{p}(0)\right) \in \mathcal{E}, \forall~ n \geq N_{0}\left(\mathbf{p}(0)\right).
\end{IEEEeqnarray}
\end{Def}

Before proceeding further, we  introduce the following intermediate definitions.
\begin{Def} [Matrix-2 Norm] The \emph{matrix-2 norm}  $ \left \|\mathbf{A}\right \|_{2}$ is defined to be \cite{Bertsekas1989}
\begin{IEEEeqnarray}{rCl} \label{eqn:matrixnorm}
\left \|\mathbf{A}\right \|_{2} = \max_{ \left\{ \mathbf{x}:~ \left \|\mathbf{x}\right \|_{2}= 1  \right\} } \left \|\mathbf{A}\mathbf{x}\right \|_{2},
\end{IEEEeqnarray}
where the \emph{vector norm} $\left \|\mathbf{x}\right \|_{2}$ is defined as \cite{Bertsekas1989}: $\left \|\mathbf{x}\right \|_{2} = \sqrt{\mathbf{x}^{T}\mathbf{x}}$.
\end{Def}
\begin{Def} [Vector Block-maximum Norm] The vector \emph{block-maximum norm}  on the power profile $\mathbf{p}(n)$ is defined to be \cite{Bertsekas1989}
\begin{IEEEeqnarray}{rCl} \label{eqn:vectorblocknorm}
\left \|\mathbf{p}(n)\right \|_{\textmd{block}} \triangleq \max_{k \in \mathcal{K}} \left \|\mathbf{p}_{k}(n)\right \|_{2}.
\end{IEEEeqnarray}
\end{Def}

For ease of elaboration, we also introduce the following \emph{contraction modulus} \cite{Scutari2008II, Bertsekas1989}:
\begin{IEEEeqnarray}{l} \label{eqn:xiDin}
\beta_{k}\left(\mathbf{D}_{k}(n)\right) \triangleq \left\| \mathbf{I}_{N_{F}} +  \mathbf{D}_{k}^{-1}(n) \partial_{kk}^{2}C_{k}(n) \right \|_{2} + \sum_{j =1, j \neq k}^{K} \left\|\mathbf{D}_{k}^{-1}(n) \partial_{kj}^{2}C_{k}(n) \right\|_{2}, \forall~ k \in \mathcal{K},
\end{IEEEeqnarray}
where $ \partial_{kj}^{2}C_{k}(n) \in \mathbb{C}^{N_{F} \times N_{F}} $ denotes the second order partial derivative of $C_{k}(n)$ w.r.t. $\mathbf{p}_{j}(n)$ evaluated at $\mathbf{p}(n)$, i.e.,
\begin{IEEEeqnarray}{rCl} \label{eqn:delta_pi}
\partial_{kj}^{2}C_{k}(n) = \frac{\partial^{2}C_{k}(n)}{\partial \mathbf{p}_{k}(n) \partial \mathbf{p}_{j}(n)} = diag\left(\left[ \eta_{kj}^{(1)}(n)  \:\ \eta_{kj}^{(2)}(n) \:\
\cdots \:\ \eta_{kj}^{(N_{F})}(n) \right]^{T} \right), \forall~ k, j \in \mathcal{K},
\end{IEEEeqnarray}
where $\eta_{kj}^{(s)}(n) = - \frac{ g_{kk}^{(s)}(n)g_{kj}^{(s)}(n) }
{\left(\rho_{k}^{(s)}(n)\right)^{2} }, \forall~ s \in \mathcal{S}$.

We now summarize the  region stability property of the proposed DSGPA  in the following theorem.
\begin{Thm} [Region Stability of DSGPA]  \label{thm:region_stability} Under the conditions that $\big\{\beta_{k}\left(\mathbf{D}_{k}(n)\right) < 1, \forall~ k \in \mathcal{K}, \forall~ n \geq 1\big\}$, for \textcolor{black}{sufficiently} large $n$, the probability that the iterates $\mathbf{p}(n)$ generated by \eqref{eqn:system_iteration} being outside the limit region $\mathcal{L}$, can be upper bounded by
\begin{IEEEeqnarray}{rCl} \label{eqn:region_prob}
P_{Region} \triangleq \lim_{n \rightarrow +\infty }\mbox{Pr}\left\{\mathbf{p}(n) \notin \mathcal{L}\right\} \leq \min_{}\left\{1, \frac{\beta}{(1-\beta)\overline{N}} \right\},
\end{IEEEeqnarray}
where $\beta = \max_{\{\forall~k \in \mathcal{K}, \forall~ n \geq 1\}} \left\{\beta_{k}\left(\mathbf{D}_{k}(n)\right) \right\}$ is the maximum contraction modulus; $\overline{N} = \frac{1}{1-\nu^{K^{2}N_{F}}}$ is the \emph{average sojourn time} of  the FSMC $\left\{\mathbf{h}(n)\right\}$; the limit region $\mathcal{L}$ is given by
\begin{IEEEeqnarray}{rCl} \label{eqn:limit_region}
\mathcal{L} = \bigcup_{q = 1}^{Q}\mathcal{L}_{q} = \bigcup_{q = 1}^{Q}\left\{ \mathbf{p} \big| \left\|\mathbf{p} - \bar{\mathbf{p}}^{(q)} \right\|_{\textmd{block} } \leq \delta \right\}
\end{IEEEeqnarray}
where $\delta = \max_{\left\{\forall q,r  \in \mathcal{Q}, q \neq r \right\} } \left \|  \bar{\mathbf{p}}^{(q)} - \bar{\mathbf{p}}^{(r)} \right \|_{\textmd{block}}$ denotes the maximum distance between two NEs in $\big\{ \bar{\mathbf{p}}^{(q)}, \forall~ q \in \mathcal{Q} \big\}$, \textcolor{black}{with $\bar{\mathbf{p}}^{(q)}$ denoting the NE of  channel state $q$. The limit region is the convex hull \cite{Boyd2004} of the NEs $\big\{ \bar{\mathbf{p}}^{(q)}, \forall~ q \in \mathcal{Q} \big\}$ corresponding to different channel states. As a result, the limit region is a polyhedron with diameter $\delta \approx \frac{P_{max}}{4}$, where $P_{max} \triangleq \max_{k}P_{k,max}$.}
\end{Thm}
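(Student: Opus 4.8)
The plan is to reduce the region-stability claim to a one-dimensional stochastic recursion on the distance between the iterate and the \emph{currently active} NE, and then bound the steady-state probability that this distance exceeds $\delta$ by Markov's inequality. Let $q(n)$ denote the index of the active channel state at slot $n$, i.e.\ $\mathbf{h}(n) = \mathbf{h}_{q(n)}$, and (for $n \geq 1$) define the scalar error process
\begin{IEEEeqnarray}{rCl}
e(n) \triangleq \left\| \mathbf{p}(n) - \bar{\mathbf{p}}^{(q(n-1))} \right\|_{\textmd{block}}, \nonumber
\end{IEEEeqnarray}
the block-maximum distance of $\mathbf{p}(n)$ to the NE of the subsystem that produced it. The first observation is that $\mathbf{p}(n) \notin \mathcal{L}$ forces $\|\mathbf{p}(n) - \bar{\mathbf{p}}^{(r)}\|_{\textmd{block}} > \delta$ for \emph{every} $r \in \mathcal{Q}$ (since $\mathcal{L} = \bigcup_q \mathcal{L}_q$ is a union of balls of radius $\delta$), in particular for $r = q(n-1)$, so that $\{\mathbf{p}(n) \notin \mathcal{L}\} \subseteq \{e(n) > \delta\}$ and hence $\mbox{Pr}\{\mathbf{p}(n) \notin \mathcal{L}\} \leq \mbox{Pr}\{e(n) > \delta\}$.

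Next I would derive a stochastic recursion for $e(n)$ from two ingredients. First, within a fixed state $q$ the map $\mathscr{T}_q$ is a block-maximum-norm contraction with modulus $\beta = \max_{k,n}\beta_k(\mathbf{D}_k(n)) < 1$ and fixed point $\bar{\mathbf{p}}^{(q)}$; this is the block-contraction property implied by the hypothesis $\beta_k(\mathbf{D}_k(n)) < 1$ together with non-expansiveness of the projection $[\cdot]^{+}$, as in \cite{Scutari2008II, Bertsekas1989}. Applied to $\mathbf{p}(n+1) = \mathscr{T}_{q(n)}(\mathbf{p}(n))$ this gives $\|\mathbf{p}(n+1) - \bar{\mathbf{p}}^{(q(n))}\|_{\textmd{block}} \leq \beta \|\mathbf{p}(n) - \bar{\mathbf{p}}^{(q(n))}\|_{\textmd{block}}$. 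Second, a triangle inequality absorbs the effect of a switch: $\|\mathbf{p}(n) - \bar{\mathbf{p}}^{(q(n))}\|_{\textmd{block}} \leq e(n) + \|\bar{\mathbf{p}}^{(q(n-1))} - \bar{\mathbf{p}}^{(q(n))}\|_{\textmd{block}}$, where the last term vanishes when no switch occurs and is bounded by $\delta$ otherwise, by the definition of $\delta$ as the maximal inter-NE distance. Writing $S(n)\in\{0,1\}$ for the switching indicator between slots $n-1$ and $n$, and noting that the contraction is applied \emph{after} the switch kick is incurred, these combine into
\begin{IEEEeqnarray}{rCl}
e(n+1) \leq \beta\, e(n) + \beta\delta\, S(n), \nonumber
\end{IEEEeqnarray}
so the kick enters multiplied by $\beta$, which is exactly what produces the factor $\beta$ in the numerator of the target bound.

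Finally I would take expectations and exploit the FSMC structure. Because every diagonal entry of each component matrix $\mathbf{T}_{kj}^{(s)}$ equals $\nu$ and the $K^{2}N_{F}$ components are independent, the probability that $\mathbf{h}(n)$ is unchanged is $\nu^{K^{2}N_{F}}$ at every slot and \emph{every} state, hence $\mathbb{E}[S(n)] = 1 - \nu^{K^{2}N_{F}} = 1/\overline{N}$ uniformly, with no appeal to stationarity. Taking expectations in the recursion (linearity avoids any need for independence of $S(n)$ and $e(n)$) yields $\mathbb{E}[e(n+1)] \leq \beta\,\mathbb{E}[e(n)] + \beta\delta/\overline{N}$, a stable linear recursion whose transient term $\beta^{n}\mathbb{E}[e(0)]$ vanishes, so $\lim_{n}\mathbb{E}[e(n)] \leq \frac{\beta\delta}{(1-\beta)\overline{N}}$. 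Markov's inequality gives $\mbox{Pr}\{e(n) > \delta\} \leq \mathbb{E}[e(n)]/\delta$, and chaining the three displays with the trivial bound that any probability is at most $1$ yields $P_{Region} \leq \min\{1, \beta/((1-\beta)\overline{N})\}$.

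The main obstacle is the first ingredient rather than the probabilistic bookkeeping: one must verify that the \emph{scaled, projected} gradient map $\mathscr{T}_q$ is genuinely a contraction in the block-maximum norm with the stated modulus, uniformly over the time-varying, CSI-adaptive scaling matrices $\mathbf{D}_k(n)$. This amounts to showing that the per-block Lipschitz constants assemble into the quantity $\beta_k(\mathbf{D}_k(n))$ of \eqref{eqn:xiDin} --- i.e.\ that the self term $\|\mathbf{I}_{N_{F}} + \mathbf{D}_k^{-1}\partial_{kk}^{2}C_k\|_{2}$ and the coupling terms $\|\mathbf{D}_k^{-1}\partial_{kj}^{2}C_k\|_{2}$ control the block-contraction --- and that $\beta < 1$ holds simultaneously for all states $q$, which is where Assumption \ref{ass:unique_NE} and the diagonal-dominance condition \eqref{eqn:sufficient_NE} enter to guarantee a uniform gap below $1$. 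Once this deterministic contraction is in place, the switched-system analysis reduces to the clean linear-recursion argument above.
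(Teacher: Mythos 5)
Your proposal is correct and reaches the paper's bound, but it gets there by a genuinely different decomposition. The paper clocks the argument by \emph{stages}: it writes the end-of-stage error recursion $\left\|\mathbf{e}(m)\right\|_{\textmd{block}} \leq \left\|\mathbf{e}(m-1)\right\|_{\textmd{block}}\phi_{m}^{N_{m}} + \delta_{m-1,m}$, unrolls it over $M$ stages, and then takes the expectation of the resulting product $\prod_{m}\beta^{N_{m}}$ over the geometric sojourn times, which requires computing $\alpha = \mathbb{E}\left\{\beta^{N_{m}}\right\} = \beta\big/\left(\beta + (1-\beta)\overline{N}\right)$ and summing a geometric series in $\alpha$. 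You instead clock by \emph{time-slots}: the linear recursion $e(n+1) \leq \beta\, e(n) + \beta\delta\, S(n)$ with $\mathbb{E}\left[S(n)\right] = 1/\overline{N}$ collapses the whole sojourn-time bookkeeping into a single Bernoulli switching indicator, and the steady-state limits coincide exactly, since $\delta\alpha/(1-\alpha) = \beta\delta\big/\left((1-\beta)\overline{N}\right)$. Your route buys two things: it avoids conditioning on the number of switchings $M$ and the vector of sojourn times altogether (you correctly note that linearity of the recursion makes independence of $S(n)$ and $e(n)$ irrelevant), and it bounds $\mathbb{E}\left[e(n)\right]$ at \emph{every} slot $n$ rather than only at stage boundaries, which is actually a closer match to the statement $\lim_{n}\mbox{Pr}\left\{\mathbf{p}(n)\notin\mathcal{L}\right\}$ than the paper's end-of-stage quantity $\left\|\mathbf{e}(\infty)\right\|_{\textmd{block}}$. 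The set inclusion $\left\{\mathbf{p}(n)\notin\mathcal{L}\right\}\subseteq\left\{e(n)>\delta\right\}$ and the final Markov-inequality step are the same as the paper's. Both arguments rest on the same unproved ingredient --- that each $\mathscr{T}_{q}$ is a block-maximum-norm contraction with modulus $\beta_{k}\left(\mathbf{D}_{k}(n)\right)$ and fixed point $\bar{\mathbf{p}}^{(q)}$ --- which the paper likewise delegates to \cite{Bertsekas1989, Scutari2008II}, so flagging it as the real obstacle rather than the probabilistic bookkeeping is an accurate assessment.
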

\begin{proof}
Please refer to Appendix \ref{app:region_stability} for the proof.
\end{proof}
\begin{Rem}
In {\em Theorem \ref{thm:region_stability}}, $\overline{N}$ can be thought as an indicator of the channel fading rate. The larger the $\overline{N}$ is, i.e., the channel changes more slowly, the smaller the $P_{Region}$ is. In particular, we have $P_{Region} = \mathcal{O}\left( 1\big/\overline{N} \right)$
\end{Rem}

\subsection{Asymptotic Order of Growth of the Tracking Errors}\label{subsec:scaling_matrix}
Steady state tracking error is the main concern when designing an iterative algorithm in a time-varying environment. Here, we consider the {\em expected-absolute-error} (EAE) and the \emph{mean-square-error} (MSE), which are formally defined as follows.
\textcolor{black}{\begin{Def} [EAE and MSE] \label{def:bias}
The EAE (or MSE) is defined to be the expectation of the distance (or \emph{squared} distance) between the iterate $\mathbf{p}(n)$ and the  corresponding NE $\mathbf{\bar{p}}^{(q)}$, i.e.,
\begin{IEEEeqnarray}{rCl} \label{eqn:def_mse}
\mbox{EAE}\left(\mathbf{p}(n)\right) = \mathbb{E}_{\{\mathbf{h}(n)\}} \left\{ \left\|    \mathbf{p}(n) - \mathbf{\bar{p}}^{(q)} \right\|_{\textmd{block}} \right\}, \\
\mbox{MSE}\left(\mathbf{p}(n)\right) = \mathbb{E}_{\{\mathbf{h}(n)\}}\left\{\left\| \mathbf{p}(n) - \mathbf{\bar{p}}^{(q)} \right\|_{\textmd{block}}^{2}\right\},
\end{IEEEeqnarray}
where the expectation shall be taken over the stationary distribution of the FMSC $\left\{\mathbf{h}(n)\right\}$. ~\hfill\IEEEQEDclosed
\end{Def}}
%

The asymptotic order of growth  of the expected-absolute-error $\mbox{EAE}\left(\mathbf{p}(n)\right)$ and the mean-square-error $\mbox{MSE}\left(\mathbf{p}(n)\right)$ are summarized in the following theorem.
\begin{Thm} [Asymptotic Order of Growth of EAE and MSE] \label{thm:UBD_MSE} Under the conditions that $\big\{\beta_{k}\left(\mathbf{D}_{k}(n)\right) < 1, \forall~ k \in \mathcal{K}, \forall~ n \geq 1\big\}$,  the order of growth of the expected-absolute-error $\mbox{EAE}\left(\mathbf{p}(n)\right)$ and mean-square-error $\mbox{MSE}\left(\mathbf{p}(n)\right)$ during steady state are given by:
\begin{IEEEeqnarray}{rCl} \label{eqn:UBD_MSE}
\mbox{EAE}\left(\mathbf{p}(n)\right) = \mathcal{O}\left(  \frac{ \beta}{(1-\beta)\overline{N}} \right), \mbox{ and }~
\mbox{MSE}\left(\mathbf{p}(n)\right) = \mathcal{O}\left( \frac{ \beta^{2}\left(2\beta + (1-\beta) \overline{N}\right) }{(1-\beta^{2})(1-\beta)\overline{N}^{2}} \right),
\end{IEEEeqnarray}
where $\beta = \max_{\{\forall~k \in \mathcal{K}, \forall~ n \geq 1\}} \left\{\beta_{k}\left(\mathbf{D}_{k}(n)\right) \right\}$ is the maximum contraction modulus.
\end{Thm}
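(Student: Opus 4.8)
The plan is to control the one-step evolution of the tracking error and then solve the resulting steady-state fixed-point recursions. Define the tracking error at slot $n$ as $e(n) \triangleq \left\| \mathbf{p}(n) - \bar{\mathbf{p}}^{(q)} \right\|_{\textmd{block}}$, where $q = q(n)$ indexes the channel state active at slot $n$. The starting point is the fact, already available from the contraction analysis behind {\em Theorem \ref{thm:region_stability}}, that each subsystem $\mathscr{T}_q$ is a contraction with modulus at most $\beta = \max_{\{k,n\}}\beta_k\left(\mathbf{D}_k(n)\right) < 1$ in the block-maximum norm, with fixed point $\bar{\mathbf{p}}^{(q)}$. Using this, I would split the evolution according to whether the FSMC $\left\{\mathbf{h}(n)\right\}$ persists or switches between slots $n$ and $n+1$: when the state persists, contraction alone gives $e(n+1) \le \beta e(n)$; when the state switches from $q$ to $r$, combining the contraction of $\mathscr{T}_r$ with the triangle inequality and the bound $\left\| \bar{\mathbf{p}}^{(q)} - \bar{\mathbf{p}}^{(r)} \right\|_{\textmd{block}} \le \delta$ (the diameter of the limit region $\mathcal{L}$) yields $e(n+1) \le \beta\left( e(n) + \delta \right)$. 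These combine into the single stochastic recursion $e(n+1) \le \beta e(n) + \beta\delta\,\chi(n)$, where $\chi(n)$ is the indicator of a switch at slot $n$.

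The crucial structural fact I would invoke next is that, because $\mathbf{T}$ is the Kronecker product of the symmetric matrices $\mathbf{T}_{kj}^{(s)}$ in {\em Assumption \ref{ass:assumptions_fsmc}}, the probability of remaining in the current state equals $\nu^{K^{2}N_{F}}$ regardless of that state; hence $\mathbb{E}\left[ \chi(n) \,\middle|\, \mathcal{F}_n \right] = 1 - \nu^{K^{2}N_{F}} = 1/\overline{N}$, where $\mathcal{F}_n$ is the history up to slot $n$. Since $e(n)$ is $\mathcal{F}_n$-measurable, taking expectations and using the tower property gives $\mathbb{E}[e(n+1)] \le \beta\,\mathbb{E}[e(n)] + \beta\delta/\overline{N}$. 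Iterating this linear recursion and letting $n \to \infty$ drives the transient term $\beta^{n}\mathbb{E}[e(0)]$ to zero and yields $\limsup_n \mathbb{E}[e(n)] \le \beta\delta \big/ \big((1-\beta)\overline{N}\big)$; since $\delta \approx P_{max}/4 = \mathcal{O}(1)$ by {\em Theorem \ref{thm:region_stability}}, this is the claimed $\mbox{EAE}\left(\mathbf{p}(n)\right) = \mathcal{O}\left( \beta \big/ \big((1-\beta)\overline{N}\big) \right)$.

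For the MSE I would square the same case-by-case bound, obtaining $e(n+1)^{2} \le \beta^{2} e(n)^{2} + \beta^{2}\left( 2\delta e(n) + \delta^{2} \right)\chi(n)$, and again take conditional expectations using $\mathbb{E}[\chi(n)\,|\,\mathcal{F}_n] = 1/\overline{N}$ together with the $\mathcal{F}_n$-measurability of $e(n)$ and $e(n)^{2}$. This produces $\mathbb{E}[e(n+1)^{2}] \le \beta^{2}\,\mathbb{E}[e(n)^{2}] + \big(\beta^{2}/\overline{N}\big)\left( 2\delta\,\mathbb{E}[e(n)] + \delta^{2} \right)$, whose inhomogeneous term converges because $\mathbb{E}[e(n)]$ already converges to the EAE bound obtained above. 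Solving the steady-state fixed point and substituting $\limsup_n \mathbb{E}[e(n)] \le \beta\delta\big/\big((1-\beta)\overline{N}\big)$ gives $\limsup_n \mathbb{E}[e(n)^{2}] \le \beta^{2}\delta^{2}\big( 2\beta + (1-\beta)\overline{N} \big) \big/ \big( (1-\beta^{2})(1-\beta)\overline{N}^{2} \big)$, i.e.\ the stated order of growth for $\mbox{MSE}\left(\mathbf{p}(n)\right)$.

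The step I expect to be the main obstacle is the rigorous justification of the expectation manipulations, specifically establishing that $\chi(n)$ is conditionally independent of the current error given $\mathcal{F}_n$ with the uniform conditional switching probability $1/\overline{N}$ — this is what decouples $\mathbb{E}[e(n)\chi(n)]$ into $\mathbb{E}[e(n)]/\overline{N}$, and it is precisely where the symmetric Kronecker structure of $\mathbf{T}$ is essential. A secondary technical point is verifying that the per-step contraction with a uniform modulus $\beta$ and the uniform jump bound $\delta$ hold globally over the feasible region, so that the recursions are valid for every trajectory rather than only locally; I would inherit this from the contraction and limit-region constructions underlying {\em Theorem \ref{thm:region_stability}}.
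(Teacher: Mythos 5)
Your proposal is correct and reaches exactly the paper's bounds, but by a genuinely different route. The paper works at the level of \emph{stages}: it unfolds the stage-indexed recursion $\|\mathbf{e}(m)\|_{\textmd{block}} \leq \|\mathbf{e}(m-1)\|_{\textmd{block}}\phi_m^{N_m} + \delta_{m-1,m}$ into a sum of products of $\beta^{N_m}$ over i.i.d.\ geometric sojourn times, then evaluates $\alpha = \mathbb{E}\{\beta^{N_m}\} = \beta/(\beta+(1-\beta)\overline{N})$ for the EAE and $\varpi = \mathbb{E}\{\beta^{2N_m}\}$ for the MSE, the latter requiring a fairly messy expansion of the squared double sum before letting $M \to \infty$. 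You instead work per time-slot, writing $e(n+1) \leq \beta e(n) + \beta\delta\,\chi(n)$ with $\chi(n)$ the switch indicator, and exploit the constant diagonal $[\mathbf{T}]_{qq} = \nu^{K^2 N_F}$ of the Kronecker-product transition matrix to get $\mathbb{E}[\chi(n)\,|\,\mathcal{F}_n] = 1/\overline{N}$, which legitimately decouples $\mathbb{E}[e(n)\chi(n)]$ via the tower property (symmetry of $\mathbf{T}$ is not actually needed here, only the constant diagonal). Solving the two resulting linear drift recursions at their fixed points reproduces the paper's EAE and MSE constants exactly, and is noticeably cleaner for the MSE since it avoids the double-sum computation entirely; what it gives up is the finite-$M$ transient expression that the paper's derivation yields as a by-product. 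The only point worth flagging is a timing convention: under the paper's update $\mathbf{p}(n+1) = \mathscr{T}_q(\mathbf{p}(n))$ with the switch occurring afterward, the jump term is $\delta\,\chi(n)$ rather than $\beta\delta\,\chi(n)$; this changes the constant by a factor of $\beta<1$ but not the claimed order of growth $\mathcal{O}(1/\overline{N})$, so the theorem is unaffected either way.
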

\begin{proof}
Please refer to Appendix \ref{app:UBD_MSE} for the proof.
\end{proof}
\begin{Rem}
The expressions  of the tracking errors EAE and MSE given in {\em Theorem \ref{thm:UBD_MSE}} depend on the average sojourn time $\overline{N}$, which can be thought as an indicator of the \emph{channel fading rate}. The larger the $\overline{N}$ is, i.e., the more slowly the channel changes, the smaller are the tracking errors. Particularly, we have $\mbox{EAE}\left(\mathbf{p}(n)\right) = \mathcal{O}\left( 1\big/\overline{N} \right)$ and $\mbox{MSE}\left(\mathbf{p}(n)\right) = \mathcal{O}\left( 1\big/\overline{N} \right)$.
\end{Rem}

\section{Tracking Error Optimization}\label{sec:mse_opt}
In the previous section, we have established the region stability property of the proposed DSGPA and derived the order of growth of the tracking errors.  In this section, we shall design the scaling matrices $\left\{\mathbf{D}_{k}(n), \forall~ k \in \mathcal{K}, \forall~ n \geq 1 \right\}$ to minimize the tracking errors.  Specifically,  we shall first construct a {\em dominated error process}.  Based on  that, we shall optimize the scaling matrices via the \emph{Markov decision process} (MDP) modeling and show that the optimal scaling matrixes can be computed distributively.

\subsection{Tracking Error Optimization}\label{subsec:scaling_matrix}
In this section, we shall derive an {\em tracking error optimal scaling matrices} for the proposed DSGPA. \textcolor{black}{For ease of elaboration, we first introduce the the notion of \emph{stage} below.
\begin{Def} [Stage] A \emph{Stage} is defined as the time-span, for which the channel fading process $\left\{\mathbf{h}(t)\right\}$ remains unchanged (i.e., stays at the same channel state), as illustrated in Fig. \ref{fig:stage}. ~\hfill\IEEEQEDclosed
\end{Def}}

We next proceed to construct a {\em dominated error process} $\{\widetilde{e}(m)\}$ defined as:
\begin{IEEEeqnarray}{rCl} \label{eqn:dominanting_error}
\widetilde{e}(m+1) = \widetilde{e}(m)\phi_{m}^{N_{m}} + \delta_{m, m+1}, \forall~ m = 1, 2, \cdots ,
\end{IEEEeqnarray}
where $\phi_{m}$ denotes the worse case contraction modulus of all the transmitters at the $m^{th}$ stage; $\delta_{m, m+1} = \left \|\mathbf{\bar{p}}(m) - \mathbf{\bar{p}}(m+1) \right \|_{\textmd{block}}$  denotes the distance between the NE  $\mathbf{\bar{p}}(m)$  of the $m^{th}$ stage and the NE  $\mathbf{\bar{p}}(m+1)$ of the  $(m+1)^{th}$  stage.
The  \emph{dominating error process} $\{\widetilde{e}(m)\}$ has the following properties.
\begin{Lem} [Property of Dominating Error Process]  \label{lem:dominanting_error}
The dominating error process $\{\widetilde{e}(m)\}$ defined in equation \eqref{eqn:dominanting_error} is an upper bound of the distance (w.r.t. the vector block-maximum norm) between the algorithm trajectory $\{\mathbf{p}(n)\}$ and the NE  at the beginning of the $m^{th}$ stage at the steady state, i.e.
\begin{IEEEeqnarray}{rCl} \label{eqn:almost_sure}
0 \leq e(m) \leq \widetilde{e}(m) \leq \frac{\delta\beta}{1-\beta}, \mbox{ almost surely (a.s.)}, \forall~ m = 1,2, \cdots ,
\end{IEEEeqnarray}
where $\{e(m)\}$ denotes the actual \emph{initial} error of each stage, and we choose $\widetilde{e}(1) = e(1)$.
\end{Lem}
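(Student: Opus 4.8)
The plan is to establish the two nontrivial inequalities in \eqref{eqn:almost_sure} separately: the \emph{dominance} $e(m) \leq \widetilde{e}(m)$ and the \emph{uniform bound} $\widetilde{e}(m) \leq \frac{\delta\beta}{1-\beta}$; the nonnegativity $0 \leq e(m)$ is immediate because $e(m)$ is a block-maximum norm. The engine behind both parts is the per-stage contraction property of the subsystem maps $\mathscr{T}_{q}$, combined with the triangle inequality applied at the stage boundaries where the equilibrium jumps.

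First I would fix a stage $m$, during which the FSMC $\{\mathbf{h}(n)\}$ is frozen at some state $q$, so that the dynamics reduce to repeated application of the deterministic map $\mathscr{T}_{q}$ with fixed point $\mathbf{\bar{p}}(m)$. Invoking the contraction-mapping property of the scaled gradient projection under the hypothesis $\beta_{k}(\mathbf{D}_{k}(n)) < 1$ (as established in \cite{Scutari2008_I, Scutari2008II, Bertsekas1989}), each iteration contracts the block-maximum distance to $\mathbf{\bar{p}}(m)$ by a factor at most $\phi_{m} = \max_{k,n}\beta_{k}(\mathbf{D}_{k}(n))$ over the iterations of that stage, so after its $N_{m}$ slots the end-of-stage error relative to $\mathbf{\bar{p}}(m)$ is at most $\phi_{m}^{N_{m}}e(m)$. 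Since the power iterate is continuous across the boundary while the equilibrium jumps from $\mathbf{\bar{p}}(m)$ to $\mathbf{\bar{p}}(m+1)$, the triangle inequality yields $e(m+1) \leq \phi_{m}^{N_{m}}e(m) + \delta_{m,m+1}$, which is exactly the recursion \eqref{eqn:dominanting_error} with $e$ in place of $\widetilde{e}$. A straightforward induction then gives the dominance: the base case $\widetilde{e}(1) = e(1)$ holds by choice, and in the inductive step $\phi_{m}^{N_{m}} \geq 0$ preserves the inequality, so $e(m) \leq \widetilde{e}(m)$ for all $m$.

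To obtain the uniform bound I would unroll \eqref{eqn:dominanting_error} into the closed form $\widetilde{e}(m) = \widetilde{e}(1)\prod_{i=1}^{m-1}\phi_{i}^{N_{i}} + \sum_{j=1}^{m-1}\delta_{j,j+1}\prod_{i=j+1}^{m-1}\phi_{i}^{N_{i}}$, then bound each factor \emph{pathwise} using $N_{i} \geq 1$, so that $\phi_{i}^{N_{i}} \leq \phi_{i} \leq \beta$, together with $\delta_{j,j+1} \leq \delta$. The transient term satisfies $\widetilde{e}(1)\prod_{i}\phi_{i}^{N_{i}} \leq \widetilde{e}(1)\beta^{m-1} \to 0$ as $m \to \infty$, and summing the remaining geometric series of geometrically contracted equilibrium jumps collapses, after the transient vanishes, to the steady-state value $\frac{\delta\beta}{1-\beta}$, establishing $\widetilde{e}(m) \leq \frac{\delta\beta}{1-\beta}$ almost surely. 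I expect the main obstacle to lie in the first part, namely rigorously importing the quasi-static contraction-mapping result to guarantee that $\mathscr{T}_{q}$ contracts in the block-maximum norm by $\phi_{m}^{N_{m}}$ \emph{uniformly} over a stage, and confirming that this contraction composes cleanly across the random but strictly positive number of iterations $N_{m}$, so that the bound holds for every sample path rather than merely in expectation.
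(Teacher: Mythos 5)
Your overall strategy is the same as the paper's. The dominance $e(m)\leq\widetilde{e}(m)$ is obtained exactly as in Appendix~\ref{app:dominanting_error}: the per-stage block-contraction (valid under $\beta_{k}\left(\mathbf{D}_{k}(n)\right)<1$, imported from the quasi-static analysis) gives the end-of-stage error $\phi_{m}^{N_{m}}e(m)$, the triangle inequality at the switching instant adds $\delta_{m,m+1}$, and induction transfers the recursion from $e$ to $\widetilde{e}$. This part of your argument is correct and complete, and your concern about composing the contraction over the random sojourn time is not an obstacle: conditioned on the switching path, each stage is a deterministic composition of $N_{m}\geq 1$ contractions of the fixed map $\mathscr{T}_{q}$, so the bound holds for every sample path.

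The gap is in the last step of the uniform bound. In your unrolled formula $\widetilde{e}(m)=\widetilde{e}(1)\prod_{i=1}^{m-1}\phi_{i}^{N_{i}}+\sum_{j=1}^{m-1}\delta_{j,j+1}\prod_{i=j+1}^{m-1}\phi_{i}^{N_{i}}$, the term with $j=m-1$ carries an \emph{empty} product, so after applying $\phi_{i}^{N_{i}}\leq\beta$ and $\delta_{j,j+1}\leq\delta$ the series is $\delta\left(1+\beta+\cdots+\beta^{m-2}\right)\rightarrow\delta/(1-\beta)$, not $\delta\beta/(1-\beta)$: the geometric series starts at $\beta^{0}$ because the most recent equilibrium jump enters the \emph{beginning-of-stage} error before any contraction has acted on it. The factor $\beta$ in the stated bound comes from measuring the error at the \emph{end} of a stage, which is what equation \eqref{eqn:WNt} in Appendix~\ref{app:region_stability} (invoked by the paper's proof with $N_{m}=1$) actually controls; there every jump $\delta_{l-1,l}$ is multiplied by at least one factor $\phi_{l}^{N_{l}}\leq\beta$ and the series is $\delta\left(\beta+\cdots+\beta^{M-1}\right)\rightarrow\delta\beta/(1-\beta)$. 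So as written your derivation does not reach the stated constant: you must either redefine the error to be sampled at the end of each stage, or settle for the slightly weaker constant $\delta/(1-\beta)$, which does not affect any of the order-of-growth conclusions downstream. (This tension between the recursion \eqref{eqn:dominanting_error}, which tracks the beginning-of-stage error, and the constant $\delta\beta/(1-\beta)$ is present in the paper's own proof as well, but a complete argument should resolve it explicitly rather than assert that the series collapses to $\delta\beta/(1-\beta)$.)
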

\begin{proof}
Please refer to Appendix \ref{app:dominanting_error} for the proof.
\end{proof}

Let $\bm{\chi}(n) = \left(\widehat{e}(n) , \mathbf{h}(n) \right) \in \mathbb{R}_{+} \bigotimes \mathbb{C}^{K^{2}N_{F} \times 1}$ denote the system state  at the $n^{th}$ time-slot, \textcolor{black}{where it is assumed that $\widehat{e}(n)= \widetilde{e}(m)$ when the $n^{th}$ time-slot is in the span of the $m^{th}$ stage.} For  a given system state realization $\bm{\chi}(n)$, the transmitters adjust the scaling matrix action $\mathbf{D}(n)$ according to a stationary \emph{scaling matrix control policy} $\pi = \mathbf{D}\left(\bm{\chi}(n)\right)$ defined below.
\begin{Def} [Stationary Scaling Matrix Control Policy] A stationary scaling matrix control policy  $\pi:\mathbb{R}_{+} \bigotimes \mathbb{C}^{K^{2}N_{F} \times 1} \rightarrow \mathbb{C}^{KN_{F} \times KN_{F}}$ is defined as the mapping from the currently observed system state $\bm{\chi}(n)$ to a scaling matrix action $\mathbf{D}(n)$. ~\hfill\IEEEQEDclosed
\end{Def}

Using {\em Lemma \ref{lem:dominanting_error}}, we shall derive an optimal scaling matrix control policy w.r.t. an average tracking error upper bound (represented by $\widehat{e}(n)$). We further assume that the \emph{dominating error process} $\{\widehat{e}(n)\}$ admits finite values\footnote{Since $\{\widehat{e}(n)\}$ is bounded, we can always set a realization of $\{\widehat{e}(n)\}$ to a larger nearest integer, which results in a finite integer-valued random process $\{\widehat{e}(n)\}$.} in  $\mathcal{E} = \{\bar{e}_{1}, \bar{e}_{2}, \cdots,$ $\bar{e}_{L}\}$. As a result, given a stationary scaling matrix control policy $\pi$,  $\left\{\bm{\chi}(n)\right\}$ is an induced Markov Chain, and the transition probability of $\bm{\chi}(n)$ is given by:
\begin{IEEEeqnarray}{l} \label{eqn:MDPkernel}
 Pr\left\{ \bm{\chi}(n+1) = (\bar{e}_{l}, \mathbf{h}_{r}) \big| \bm{\chi}(n) = (\bar{e}_{i}, \mathbf{h}_{q}),   \pi\left( \bm{\chi}(n)\right)  \right\} =  \\ \notag
 T_{qr}\left(1-\nu^{K^{2}N_{F}}\right)\left( \frac{ \bar{e}_{l} - \delta_{qr} }{ \bar{e}_{i} } \right)^{
K^{2}N_{F}\log_{ \beta \left( \bm{\chi(n) } \right) } \nu}, \forall~ 1 \leq l,  i \leq L, 1 \leq q,  r \leq Q,
\end{IEEEeqnarray}
where $T_{qr} = \left[\mathbf{T}\right]_{qr}$ is the transition probability from state $q$ to state $r$ of the FSMC $\left\{\mathbf{h}(n)\right\}$; $\delta_{qr} = \left \|  \bar{\mathbf{p}}^{(q)} - \bar{\mathbf{p}}^{(r)} \right \|_{\textmd{block}}$ is the  distance between the NE  $\mathbf{\bar{p}}^{(q)}$  of the $m^{th}$ stage and the NE  $\mathbf{\bar{p}}^{(r)}$ of the  $(m+1)^{th}$  stage;  and $\beta \left( \bm{\chi}(n)\right)  = \max_{k \in \mathcal{K}} \left\{  \beta_{k}\left(\mathbf{D}_{k} (n)\right) \right\}$  is the maximum contraction modulus at time-slot $n$.

The tracking error optimization problem is formally given below.
\begin{Prob} [Tracking Error Control Problem] \label{prob:MDPformulation}
To minimize the average tracking error, the adaptive scaling matrix control policy $\pi^{*}$ is given by
\begin{IEEEeqnarray}{rCl} \label{eqn:MDPformulation}
\pi^{*} = \argmin{\pi}{J^{\pi} } ,  \mbox{ with } J^{\pi} = \limsup_{N \rightarrow + \infty}\frac{1}{N}\sum_{n=1}^{N}\mathbb{E}^{\pi}\left\{ g\left(\widehat{e}(n) \right) \right\},
\end{IEEEeqnarray}
where $g\left(\widehat{e}(n)\right)$ is an increasing function of $\{\widehat{e}(n)\}$, which measures the per-stage tracking error and $\mathbb{E}^{\pi}$ denotes the expectation w.r.t. the induced measure (induced by the control policy $\pi$).
\end{Prob}

In general, the optimization problem in \eqref{eqn:MDPformulation} is very difficult to solve due to the huge dimensions of variables (control policy) involved as well as difficulty to express the optimization objective function $\mathbb{E}^{\pi}\left\{ g\left(\widehat{e}(n) \right) \right\}$ explicitly as the variable $\pi$. Yet, utilizing the special structure of the transition kernel in \eqref{eqn:MDPkernel}, the solution of {\em Problem \ref{prob:MDPformulation}} is summarized in the following theorem, \textcolor{black}{which shows that the adaptive scaling matrices $\left\{\mathbf{D}_{k}(n), \forall k \in \mathcal{K}\right\}$ can be computed independently in every time-slot.}
\begin{Thm} [Solution of the Tracking Error Control Problem] The optimal scaling matrix control policy for {\em Problem \ref{prob:MDPformulation}} is  given by the solution of the following optimization problem. \label{thm:solution_mdp}
\begin{IEEEeqnarray}{rCl} \label{eqn:mse_prob}
\begin{array}{cc}
\begin{array}{c}
 \minimize{\mathbf{D}(n)}  \\
\mbox{subject to} \\ \mbox{} \end{array} &
\begin{array}{l}
\max_{k \in \mathcal{K}}\beta_{k}\left(\mathbf{D}_{k}(n)\right)  \\ \max_{k \in \mathcal{K}}\beta_{k}\left(\mathbf{D}_{k}(n)\right)  < 1, \\ \mathbf{D}(n) \succ 0
 \end{array}
\end{array}
\end{IEEEeqnarray}
where $\beta_{k}\left(\mathbf{D}_{k}(n)\right)$ denotes the contraction modulus, which is defined in equation \eqref{eqn:xiDin}.
\end{Thm}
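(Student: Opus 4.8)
The plan is to show that the long-run average-cost MDP of \emph{Problem \ref{prob:MDPformulation}} collapses to the myopic, per-time-slot program \eqref{eqn:mse_prob} by exploiting two structural features of the induced chain $\{\bm\chi(n)\}$. First, inspecting the transition kernel \eqref{eqn:MDPkernel}, the scaling-matrix action $\mathbf{D}(n)$ enters \emph{only} through the scalar $\beta(\bm\chi(n)) = \max_{k\in\mathcal{K}}\beta_k(\mathbf{D}_k(n))$ (via the exponent $K^{2}N_{F}\log_{\beta(\bm\chi(n))}\nu$), whereas the channel-transition factor $T_{qr}(1-\nu^{K^{2}N_{F}})$ and the displacement $\delta_{qr}$ are action-independent. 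Second, the per-stage cost $g(\widehat{e}(n))$ depends on the state alone, not on the action. I would first make these two observations precise so that the controller's only freedom, in each state, is to select the attainable value of $\beta$. Since each $\beta_k$ in \eqref{eqn:xiDin} involves only the block $\mathbf{D}_k$, minimizing $\max_k\beta_k(\mathbf{D}_k(n))$ over block-diagonal $\mathbf{D}(n)\succ0$ further separates across transmitters, giving the claimed per-node, per-slot computation.

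The second step is to establish a monotonicity (stochastic-ordering) property. Working with the dominating-error recursion \eqref{eqn:dominanting_error}, $\widetilde{e}(m+1)=\widetilde{e}(m)\phi_m^{N_m}+\delta_{m,m+1}$, I would observe that the right-hand side is nondecreasing in both the incoming error $\widetilde{e}(m)$ and the stage contraction modulus $\phi_m=\beta(\bm\chi)$, for every fixed sojourn time $N_m$ and displacement $\delta_{m,m+1}$. Crucially, the sojourn time $N_m$ is governed by the channel self-transition $\nu^{K^{2}N_{F}}$ and is therefore action-independent, so the only action-dependence in the recursion is through $\phi_m$. Equivalently, in \eqref{eqn:MDPkernel} a smaller $\beta$ shifts probability mass toward smaller next-error levels $\bar{e}_l$, i.e. the next-error distribution is first-order stochastically decreasing in $\beta$. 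This is the engine that makes ``smaller $\beta$ is unambiguously better'' along the whole trajectory, not merely one step.

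The third step is a pathwise coupling argument. Fix a realization of the channel chain $\{\mathbf{h}(n)\}$ — hence a common sequence of sojourn times $\{N_m\}$, visited states, and displacements $\{\delta_{m,m+1}\}$ — and run two copies of the dominating-error recursion from the same $\widetilde{e}(1)$: one under the greedy policy $\pi^{\ast}$ that solves \eqref{eqn:mse_prob} (picking the minimal attainable $\beta$ in each channel state), the other under an arbitrary stationary policy $\pi$. Because the channel sample path is shared, the minimal attainable stage modulus is identical for both copies and is realized by $\pi^{\ast}$, so $\phi_m^{\ast}\le\phi_m^{\pi}$ stagewise. An induction on $m$ using the monotonicity of the recursion then yields $\widetilde{e}^{\ast}(m)\le\widetilde{e}^{\pi}(m)$ almost surely for all $m$. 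Since $g$ is increasing, $g(\widehat{e}^{\ast}(n))\le g(\widehat{e}^{\pi}(n))$ for every $n$; taking expectations and the Ces\`aro limit in \eqref{eqn:MDPformulation} gives $J^{\pi^{\ast}}\le J^{\pi}$ for all $\pi$, so $\pi^{\ast}$ is optimal. Feasibility of $\pi^{\ast}$ (attaining $\beta<1$ with $\mathbf{D}(n)\succ0$) follows from \emph{Assumption \ref{ass:unique_NE}}, under which a contraction exists in every channel state.

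The main obstacle I anticipate is the rigorous reduction from the average-cost MDP to the myopic rule, rather than the coupling itself. The delicate points are: (i) confirming the action-independence of $\{N_m\}$ and the resulting monotone dependence of \eqref{eqn:MDPkernel} on $\beta$, which rests on the sign bookkeeping for $\log_{\beta}\nu$ on $(0,1)$ (both $\ln\nu$ and $\ln\beta$ negative), so that decreasing $\beta$ genuinely increases the mass on smaller errors; and (ii) reconciling the stage-indexed process $\{\widetilde{e}(m)\}$ with the slot-indexed state $\{\widehat{e}(n)\}$ appearing in the objective \eqref{eqn:MDPformulation}, via the identification $\widehat{e}(n)=\widetilde{e}(m)$ on the $m^{th}$ stage together with \emph{Lemma \ref{lem:dominanting_error}} to transfer the bound from the dominating process to the genuine tracking error. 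Once these are settled, the average-cost optimality of the greedy rule — and hence the equivalence of $\pi^{\ast}$ with the solution of \eqref{eqn:mse_prob} — is immediate.
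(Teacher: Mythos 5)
Your argument is correct and rests on the same structural observations the paper uses --- the action enters the induced chain $\{\bm{\chi}(n)\}$ only through $\beta\left(\bm{\chi}(n)\right)$, the sojourn times and NE displacements are action-independent, the per-stage cost is increasing in the error, and the block-diagonal structure of $\mathbf{D}(n)$ decouples the minimization across transmitters --- but you execute the reduction differently. The paper's Appendix writes down the average-cost Bellman equation, notes that the base $(\bar{e}_{l}-\delta_{qr})/\bar{e}_{i}$ of the kernel lies in $(0,1)$, and then asserts that minimizing $\beta(n)$ ``simultaneously minimizes all the transition probabilities,'' concluding that the greedy choice solves the Bellman minimization; it never verifies the value-function monotonicity that would make that step airtight (and, read literally, probabilities summing to one cannot all be minimized). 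Your pathwise coupling on the dominating-error recursion $\widetilde{e}(m+1)=\widetilde{e}(m)\phi_{m}^{N_{m}}+\delta_{m,m+1}$ bypasses the Bellman equation entirely: conditioning on the channel sample path, the greedy policy attains the stagewise-minimal modulus, which by \emph{Lemma \ref{lem:Trans_sumprob}} depends only on the channel state and is therefore common to both coupled copies; induction then gives $\widetilde{e}^{*}(m)\leq\widetilde{e}^{\pi}(m)$ almost surely, and monotonicity of $g$ plus the Ces\`{a}ro limit yields $J^{\pi^{*}}\leq J^{\pi}$. This buys rigor: it makes precise the first-order stochastic dominance the paper only gestures at --- indeed the kernel \eqref{eqn:MDPkernel} is exactly the pushforward of the geometric sojourn-time law through $N\mapsto\bar{e}_{i}\beta^{N}+\delta_{qr}$, so your dominance claim and the sign bookkeeping for $\log_{\beta}\nu$ are immediate --- at the price of discarding the MDP machinery the paper nominally invokes but does not substantively use. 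The one point worth making explicit in your write-up is the one you already flag: the minimal attainable $\beta$ in a given channel state must not depend on the error coordinate of $\bm{\chi}(n)$, which is what licenses the claim that both coupled copies can realize the same $\phi_{m}^{*}$.
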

\begin{proof}
Please refer to Appendix \ref{app:solution_mdp} for the proof.
\end{proof}

\subsection{Distributed Implementation of the Optimal Scaling Matrices}\label{subsec:scaling_matrix}
By exploiting  the \emph{block-diagonal} structure of $\mathbf{D}(n)$, the {\em optimization problem \eqref{eqn:mse_prob}} can be naturally decoupled into $K$ subproblems and solved \emph{distributively} at the $K$  transmitters. The subproblem that needs to be solved at the $k^{th}$ transmitter can be formulated as follows.
\begin{Prob} [Subproblem of Scaling Matrix Optimization] \label{prob:mse_submini}
The optimal scaling matrix $\mathbf{D}_{k}(n)$ at the $k^{th}$ transmitter for minimizing the tracking errors is given by the solution of the following problem, $\forall~ k \in \mathcal{K}$.
\begin{IEEEeqnarray}{rCl} \label{eqn:mse_subprob}
\begin{array}{cc}
\begin{array}{c}
 \minimize{\mathbf{D}_{k}(n)}  \\
\mbox{subject to} \\ \mbox{} \end{array} &
\begin{array}{l}
\beta_{k}\left(\mathbf{D}_{k}(n)\right)  \\ \beta_{k}\left(\mathbf{D}_{k}(n)\right)  < 1, \\ \mathbf{D}_{k}(n) \succ 0. \end{array}
\end{array}
\end{IEEEeqnarray}
\end{Prob}

Since the objective function $\beta_{k}\left(\mathbf{D}_{k}(n)\right)$ of {\em Problem \ref{prob:mse_submini}}  consists of  linear functions of the positive definite matrix $\mathbf{D}_{k}(n)$ and sum of matrix norms,  {\em Problem \ref{prob:mse_submini}} is a convex optimization problem \cite{Boyd2004}.  However, the objective function $\beta_{k}\left(\mathbf{D}_{k}(n)\right)$ is \emph{not} differentiable. To find a closed-form solution of {\em Problem \ref{prob:mse_submini}}, we need  the following intermediate results.

\begin{Lem} [Objective Function of Problem \ref{prob:mse_submini}] \label{lem:Trans_sumprob}
The objective function $\beta_{k}\left(\mathbf{D}_{k}(n)\right)$ of {\em Problem \ref{prob:mse_submini}} can be rewritten as
\begin{IEEEeqnarray}{rCl} \label{eqn:xiDin_Trans}
\beta_{k}\left(\mathbf{D}_{k}(n)\right) = \left\| \mathbf{I}_{N_{F}} +  \mathbf{D}_{k}^{-1}(n) \partial_{kk}^{2}C_{k}(n) \right \|_{2} + \sum_{j =1, j \neq k}^{K} \left\|\mathbf{D}_{k}^{-1}(n) \partial_{kj}^{2}C_{k}(n) \right\|_{2}, \forall~ k \in \mathcal{K},
\end{IEEEeqnarray}
and can be lower bounded by
\begin{IEEEeqnarray}{rCl} \label{eqn:lowerBD_G}
     \beta_{k}\left(\mathbf{D}_{k}(n)\right) \geq \sum_{j =1, j \neq k}^{K} \max_{s \in \mathcal{S}}\frac{g_{kj}^{(s)}(n)}{g_{kk}^{(s)}(n)}, \forall~ k \in \mathcal{K}, \forall~ \mathbf{p}(n) \in \Omega.
     \end{IEEEeqnarray}
\end{Lem}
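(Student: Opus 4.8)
The plan is to treat the first displayed equation in the statement as a verbatim restatement of the definition \eqref{eqn:xiDin}, so that all the real work goes into establishing the lower bound \eqref{eqn:lowerBD_G}. Observe first that the leading summand $\left\|\mathbf{I}_{N_{F}} + \mathbf{D}_{k}^{-1}(n)\partial_{kk}^{2}C_{k}(n)\right\|_{2}$ is a norm and hence nonnegative, so the substance of the bound is to control the cross terms $\left\|\mathbf{D}_{k}^{-1}(n)\partial_{kj}^{2}C_{k}(n)\right\|_{2}$ from below. The structural fact I would exploit is that the Hessian blocks in \eqref{eqn:delta_pi} are all diagonal and mutually proportional: since $\eta_{kj}^{(s)}(n) = \eta_{kk}^{(s)}(n)\,g_{kj}^{(s)}(n)/g_{kk}^{(s)}(n)$, we have $\partial_{kj}^{2}C_{k}(n) = \partial_{kk}^{2}C_{k}(n)\,\mathbf{R}_{j}$, where $\mathbf{R}_{j} \triangleq \mathrm{diag}\left(g_{kj}^{(s)}(n)/g_{kk}^{(s)}(n)\right)$ is diagonal with positive entries, so that $\left\|\mathbf{R}_{j}\right\|_{2} = \max_{s\in\mathcal{S}} g_{kj}^{(s)}(n)/g_{kk}^{(s)}(n)$.

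With this factorization I would write $\mathbf{D}_{k}^{-1}(n)\partial_{kk}^{2}C_{k}(n) = \mathbf{C} - \mathbf{I}_{N_{F}}$, where $\mathbf{C} \triangleq \mathbf{I}_{N_{F}} + \mathbf{D}_{k}^{-1}(n)\partial_{kk}^{2}C_{k}(n)$ is exactly the matrix appearing in the first summand, whence $\mathbf{D}_{k}^{-1}(n)\partial_{kj}^{2}C_{k}(n) = (\mathbf{C}-\mathbf{I}_{N_{F}})\mathbf{R}_{j} = \mathbf{C}\mathbf{R}_{j} - \mathbf{R}_{j}$. Applying the reverse triangle inequality and then submultiplicativity of the matrix-$2$ norm gives, for each $j \neq k$,
\begin{IEEEeqnarray}{rCl}
\left\|\mathbf{D}_{k}^{-1}(n)\partial_{kj}^{2}C_{k}(n)\right\|_{2} & \geq & \left\|\mathbf{R}_{j}\right\|_{2} - \left\|\mathbf{C}\right\|_{2}\left\|\mathbf{R}_{j}\right\|_{2} = (1 - c)\left\|\mathbf{R}_{j}\right\|_{2}, \notag
\end{IEEEeqnarray}
where $c \triangleq \left\|\mathbf{C}\right\|_{2} \geq 0$. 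Summing over $j \neq k$ and adding back the leading term $c$ yields
\begin{IEEEeqnarray}{rCl}
\beta_{k}\left(\mathbf{D}_{k}(n)\right) & \geq & c + (1-c)\sum_{j \neq k}\left\|\mathbf{R}_{j}\right\|_{2} = S + c(1 - S), \notag
\end{IEEEeqnarray}
with $S \triangleq \sum_{j\neq k}\max_{s\in\mathcal{S}} g_{kj}^{(s)}(n)/g_{kk}^{(s)}(n)$.

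To close the argument I would invoke the standing uniqueness hypothesis, {\em Assumption \ref{ass:unique_NE}}. Condition \eqref{eqn:sufficient_NE} forces each ratio $\max_{s} g_{kj}^{(s)}/g_{kk}^{(s)}$ to lie essentially below $1/(K-1)$, so that summing the $K-1$ interfering contributions gives $S \leq 1$. Since $c \geq 0$, the residual term $c(1-S)$ is then nonnegative and the estimate collapses to $\beta_{k}\left(\mathbf{D}_{k}(n)\right) \geq S$, which is precisely \eqref{eqn:lowerBD_G}. Because $c$ enters only through $\mathbf{D}_{k}(n)$ and through the current $\mathbf{p}(n)$ via $\rho_{k}^{(s)}(n)$, and was bounded away in a way independent of its value, the inequality holds for every $\mathbf{D}_{k}(n)\succ 0$ and every $\mathbf{p}(n)\in\Omega$, as claimed.

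I would flag this last step as the main obstacle rather than the norm manipulations, which are routine. The reverse-triangle chain only produces the clean bound $\beta_{k}\geq S$ when $S\leq 1$; indeed a direct single-carrier computation (taking $\mathbf{D}_{k}^{-1}(n)$ scalar) shows that if $S>1$ one can drive $\beta_{k}$ below $S$, toward $1$, by shrinking $\mathbf{D}_{k}^{-1}(n)$, so the stated inequality is genuinely false outside the interference-limited regime. The careful point is therefore to argue that {\em Assumption \ref{ass:unique_NE}}, i.e.\ the very contraction regime in which the DSGPA is analyzed, is what guarantees $S\leq 1$ and hence the nonnegativity of $c(1-S)$.
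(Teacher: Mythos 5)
Your proof is correct, and it reaches the lower bound by a genuinely different and more elementary route than the paper. The paper first establishes an auxiliary result ({\em Lemma \ref{lem:eigenvalue_AB}}: $\left\|\mathbf{A}\mathbf{B}\right\|_{2} \geq \lambda_{min}\left(\mathbf{A}\right)\left\|\mathbf{B}\right\|_{2}$ for positive definite matrices), sets $\mathbf{\widetilde{D}}_{k}(n) = -\mathbf{D}_{k}^{-1}(n)\partial_{kk}^{2}C_{k}(n)$, and parametrizes the whole estimate by the scalar $\lambda_{min}\left(\mathbf{\widetilde{D}}_{k}(n)\right)$, obtaining $\beta_{k} \geq \left|1-\lambda_{min}\right| + \lambda_{min}S$ and then arguing that the minimum over $\lambda_{min}\in(0,1]$ is $S$. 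You instead use the same factorization $\partial_{kj}^{2}C_{k}(n) = \partial_{kk}^{2}C_{k}(n)\mathbf{R}_{j}$ but control everything through $c=\left\|\mathbf{I}_{N_{F}}+\mathbf{D}_{k}^{-1}(n)\partial_{kk}^{2}C_{k}(n)\right\|_{2}$, i.e., the first summand itself, via the reverse triangle inequality and submultiplicativity, arriving at $\beta_{k}\geq S+c(1-S)$. This buys you a proof that needs no eigenvalue lemma and no symmetry considerations for the product $\mathbf{D}_{k}^{-1}(n)\partial_{kk}^{2}C_{k}(n)$, at the cost of a slightly weaker intermediate inequality (since $1-c\leq\lambda_{min}\left(\mathbf{\widetilde{D}}_{k}(n)\right)$); it also exhibits the minimizer $c=0$ directly, which is exactly the content of {\em Theorem \ref{thm:optimal_soln}}. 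Your diagnosis of where the real difficulty sits is also accurate: both derivations collapse to $\beta_{k}\geq S$ only when $S=\sum_{j\neq k}\max_{s}g_{kj}^{(s)}(n)/g_{kk}^{(s)}(n)\leq 1$, and the counterexample you sketch for $S>1$ is valid. The paper extracts $S<1$ from the feasibility constraint $\beta_{k}\left(\mathbf{D}_{k}(n)\right)<1$ of {\em Problem \ref{prob:mse_submini}} (its step (b)), whereas you appeal to {\em Assumption \ref{ass:unique_NE}}; note that \eqref{eqn:sufficient_NE} yields $\max_{s}g_{kj}^{(s)}/g_{kk}^{(s)}<1/(K-1)$ only when the power budgets are equal, so the cleaner closure — available from your own chain — is to observe that if $S\geq 1$ then $\max\left\{c,\,S+c(1-S)\right\}\geq 1$ for every $c\geq 0$, so that no feasible $\mathbf{D}_{k}(n)$ exists and the claimed inequality holds vacuously outside the contraction regime.
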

\begin{proof}
Please refer to Appendix \ref{app:Trans_sumprob} for the proof.
\end{proof}

By virtue of {\em Lemma \ref{lem:Trans_sumprob}}, we can get a closed-form solution for {\em Problem \ref{prob:mse_submini}}. We summarize the main results of this section into the following theorem.
\begin{Thm} [Optimal Solution of the Tracking Error Control Problem] \label{thm:optimal_soln}
The optimal solution $\mathbf{D}_{k}(n) $ of  {\em Problem \ref{prob:mse_submini}} is given by
\begin{IEEEeqnarray}{rCl} \label{eqn:Di_soln}
\mathbf{D}_{k}(n) = -\partial_{kk}^{2}C_{k}(n), \forall~ k \in \mathcal{K}.
\end{IEEEeqnarray}
\end{Thm}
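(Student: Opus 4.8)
The plan is to exhibit the candidate $\mathbf{D}_k(n) = -\partial_{kk}^2 C_k(n)$, verify that it is feasible, and then show that it attains the universal lower bound from \emph{Lemma \ref{lem:Trans_sumprob}}, which immediately certifies global optimality for the convex \emph{Problem \ref{prob:mse_submini}}. First I would establish feasibility. From \eqref{eqn:delta_pi}, $\partial_{kk}^2 C_k(n)$ is diagonal with $s$-th entry $\eta_{kk}^{(s)}(n) = -\big(g_{kk}^{(s)}(n)\big)^2 \big/ \big(\rho_k^{(s)}(n)\big)^2 < 0$; hence $-\partial_{kk}^2 C_k(n)$ is diagonal with strictly positive entries, making it symmetric positive definite (so $\mathbf{D}_k(n) \succ 0$ holds) and trivially compatible with the imposed block-diagonal structure.

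Next I would substitute the candidate into the contraction modulus \eqref{eqn:xiDin_Trans}. Because $\mathbf{D}_k^{-1}(n)\partial_{kk}^2 C_k(n) = \big(-\partial_{kk}^2 C_k(n)\big)^{-1}\partial_{kk}^2 C_k(n) = -\mathbf{I}_{N_F}$, the self-term collapses, $\big\|\mathbf{I}_{N_F} + \mathbf{D}_k^{-1}(n)\partial_{kk}^2 C_k(n)\big\|_2 = 0$. For each cross-term ($j \neq k$), both $\partial_{kk}^2 C_k(n)$ and $\partial_{kj}^2 C_k(n)$ are diagonal, so $\big(-\partial_{kk}^2 C_k(n)\big)^{-1}\partial_{kj}^2 C_k(n)$ is diagonal with $s$-th entry $\eta_{kj}^{(s)}(n)\big/\big(-\eta_{kk}^{(s)}(n)\big) = -g_{kj}^{(s)}(n)\big/g_{kk}^{(s)}(n)$. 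Since the matrix-2 norm of a diagonal matrix equals its largest-magnitude diagonal entry and the power gains are non-negative, each cross-term equals $\max_{s \in \mathcal{S}} g_{kj}^{(s)}(n)\big/g_{kk}^{(s)}(n)$.

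Summing, I would obtain $\beta_k\big(-\partial_{kk}^2 C_k(n)\big) = \sum_{j \neq k} \max_{s \in \mathcal{S}} g_{kj}^{(s)}(n)\big/g_{kk}^{(s)}(n)$, which is precisely the lower bound \eqref{eqn:lowerBD_G} that \emph{Lemma \ref{lem:Trans_sumprob}} shows to hold for \emph{every} feasible $\mathbf{D}_k(n)$. Attaining a universal lower bound means no feasible point can do better, so the candidate is a global minimizer. To close the argument I would confirm that the remaining inequality constraint $\beta_k(\mathbf{D}_k(n)) < 1$ is met by this minimizer: \emph{Assumption \ref{ass:unique_NE}} bounds each ratio via \eqref{eqn:sufficient_NE}, so (for equal power budgets) each $\max_s g_{kj}^{(s)}/g_{kk}^{(s)} < 1/(K-1)$ and the sum of $K-1$ cross-terms stays strictly below one, guaranteeing both feasibility and that the constraint is inactive.

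The only delicate points are bookkeeping rather than deep: matching the value achieved by the candidate against the \emph{Lemma \ref{lem:Trans_sumprob}} floor, and using the sign of $\eta_{kk}^{(s)}(n)$ to certify positive definiteness. The pleasant feature that makes the proof short is the coincidence that a single diagonal choice simultaneously annihilates the self-term and pushes the interference sum down to its minimal value, so the convex program is solved in closed form without resorting to subgradient optimality conditions despite the non-differentiability of $\beta_k\left(\mathbf{D}_k(n)\right)$.
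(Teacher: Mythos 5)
Your proposal is correct, and it rests on the same pivot as the paper's own argument: \emph{Lemma \ref{lem:Trans_sumprob}} supplies a universal floor $\sum_{j\neq k}\max_{s}g_{kj}^{(s)}(n)/g_{kk}^{(s)}(n)$, and the candidate $\mathbf{D}_{k}(n)=-\partial_{kk}^{2}C_{k}(n)$ attains it. Where you differ is in how attainment is certified. The paper works with $\mathbf{\widetilde{D}}_{k}(n)=-\mathbf{D}_{k}^{-1}(n)\partial_{kk}^{2}C_{k}(n)$ and derives a second lower bound \eqref{eqn:optSOLN} in terms of $\lambda_{max}\bigl(\mathbf{\widetilde{D}}_{k}(n)\bigr)$; combining it with \eqref{eqn:bound1} it argues that the extreme eigenvalues should both equal one, forcing $\mathbf{\widetilde{D}}_{k}(n)=\mathbf{I}_{N_{F}}$. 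You instead substitute the candidate directly, use the diagonality of $\partial_{kk}^{2}C_{k}(n)$ and $\partial_{kj}^{2}C_{k}(n)$ from \eqref{eqn:delta_pi} to annihilate the self-term and evaluate each cross-term as $\max_{s}g_{kj}^{(s)}(n)/g_{kk}^{(s)}(n)$ in closed form, and read off that the floor is met. Your route is the more elementary and self-contained of the two: it needs no eigenvalue reasoning beyond the already-proved lemma, and it makes explicit the numerical value achieved at the optimum, which the paper asserts but never computes. You also add two checks the paper omits entirely from its proof: positive definiteness of the candidate via the sign of $\eta_{kk}^{(s)}(n)$, and satisfaction of the strict constraint $\beta_{k}\left(\mathbf{D}_{k}(n)\right)<1$ via \emph{Assumption \ref{ass:unique_NE}} (your parenthetical caveat about equal power budgets is a fair flag, since \eqref{eqn:sufficient_NE} carries the ratio $P_{j,max}/P_{k,max}$). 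The paper's eigenvalue route, for its part, gives a slightly stronger structural statement -- it indicates why \emph{any} optimizer must have $\mathbf{\widetilde{D}}_{k}(n)=\mathbf{I}_{N_{F}}$ -- but for establishing the theorem as stated your direct verification suffices and is cleaner.
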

\begin{proof}
Please refer to Appendix \ref{app:optimal_soln} for the proof.
\end{proof}

From {\em Theorem \ref{thm:optimal_soln}} we know that the optimal scaling matrix $\mathbf{D}_{k}^{-1}(n)$ is given by the \emph{minus inverse of the second order partial derivative} of the capacity function $C_{k}(n)$  w.r.t. the power allocation vector $\mathbf{p}_{k}(n)$, $\forall~ k \in \mathcal{K}$. Therefore,  the scaling matrix $\mathbf{D}_{k}^{-1}(n)$ can be computed at the $k^{th}$ transmitter based on \emph{local} information only. {\em Theorem \ref{thm:optimal_soln}} also implies  that the smallest achievable value of $\beta_{k}\left(\mathbf{D}_{k}(n)\right)$ is the lower bound given by \eqref{eqn:lowerBD_G}. Based on that, we get an alternative sufficient condition for the iteration \eqref{eqn:system_iteration} to be a \emph{block-contraction mapping} w.r.t. the block-maximum norm for each subsystem $q \in \mathcal{Q}$, which is summarized into the following corollary.
\begin{Cor} [An Alternative Sufficient Condition] \label{cor:new_condition}
When the scaling matrices are chosen to be $\mathbf{D}_{k}(n) = -\partial_{kk}^{2}C_{k}(n), \forall~ k
\in \mathcal{K}$, an alternative sufficient condition for iteration \eqref{eqn:system_iteration} to be a \emph{block-contraction mapping} w.r.t. the vector block-maximum norm is given by
\begin{IEEEeqnarray}{rCl} \label{eqn:mse_prob_cor}
\max_{k \in \mathcal{K}} \left\{ \sum_{j =1, j \neq k}^{K} \max_{s \in \mathcal{S}}\frac{g_{kj}^{(s)}(q)}{g_{kk}^{(s)}(q)}\right\} < 1, \forall~ q \in \mathcal{Q},
\end{IEEEeqnarray}
where $g_{kj}^{(s)}(q) \triangleq g_{kj}^{(s)}(n)$ denotes the power gain when the FSMC $\{\mathbf{h}(n)\}$ is in state $q \in \mathcal{Q}$.
\end{Cor}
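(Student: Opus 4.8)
The plan is to establish the corollary by direct substitution of the optimal scaling matrix from \emph{Theorem \ref{thm:optimal_soln}} into the contraction modulus $\beta_k\left(\mathbf{D}_k(n)\right)$, and then to invoke the block-contraction criterion already established in the analysis underlying \emph{Theorem \ref{thm:region_stability}}. The strategy is therefore computational rather than conceptual: I expect the substitution to collapse the first term of $\beta_k\left(\mathbf{D}_k(n)\right)$ to zero and to reduce the remaining sum exactly to the interference-to-signal ratios appearing in the claimed condition.

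First I would set $\mathbf{D}_k(n) = -\partial_{kk}^2 C_k(n)$, so that $\mathbf{D}_k^{-1}(n)\partial_{kk}^2 C_k(n) = -\mathbf{I}_{N_F}$ and hence the first summand $\left\|\mathbf{I}_{N_F} + \mathbf{D}_k^{-1}(n)\partial_{kk}^2 C_k(n)\right\|_2 = \left\|\mathbf{0}\right\|_2 = 0$ vanishes identically. Next I would exploit the fact that both $\partial_{kk}^2 C_k(n)$ and $\partial_{kj}^2 C_k(n)$ are diagonal by \eqref{eqn:delta_pi}, so that each product $\mathbf{D}_k^{-1}(n)\partial_{kj}^2 C_k(n)$ is again diagonal with $(s,s)$ entry $-g_{kj}^{(s)}(n)/g_{kk}^{(s)}(n)$. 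Since the matrix-2 norm of a diagonal matrix equals the largest absolute value of its diagonal entries, I obtain $\left\|\mathbf{D}_k^{-1}(n)\partial_{kj}^2 C_k(n)\right\|_2 = \max_{s\in\mathcal{S}} g_{kj}^{(s)}(n)/g_{kk}^{(s)}(n)$, and therefore $\beta_k\left(\mathbf{D}_k(n)\right) = \sum_{j\neq k}\max_{s\in\mathcal{S}} g_{kj}^{(s)}(n)/g_{kk}^{(s)}(n)$. This simultaneously confirms that the lower bound \eqref{eqn:lowerBD_G} of \emph{Lemma \ref{lem:Trans_sumprob}} is attained with equality under the optimal scaling, consistent with the remark following \emph{Theorem \ref{thm:optimal_soln}}.

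Finally I would recall from the contraction argument underlying \emph{Theorem \ref{thm:region_stability}} that, for a fixed channel state, each subsystem map $\mathscr{T}_q$ is a block-contraction with respect to the vector block-maximum norm whenever $\max_{k\in\mathcal{K}}\beta_k\left(\mathbf{D}_k(n)\right) < 1$; indeed, the nonexpansiveness of the Euclidean projection together with the triangle inequality applied blockwise yields $\left\|\mathscr{T}_q(\mathbf{p}) - \mathscr{T}_q(\mathbf{p}')\right\|_{\textmd{block}} \leq \big(\max_k \beta_k\left(\mathbf{D}_k(n)\right)\big)\left\|\mathbf{p} - \mathbf{p}'\right\|_{\textmd{block}}$. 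Substituting the closed-form value of $\beta_k\left(\mathbf{D}_k(n)\right)$ computed above, this contraction condition becomes precisely the stated inequality $\max_{k\in\mathcal{K}}\big\{\sum_{j\neq k}\max_{s\in\mathcal{S}} g_{kj}^{(s)}(q)/g_{kk}^{(s)}(q)\big\} < 1$ for every state $q \in \mathcal{Q}$, which completes the argument.

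I anticipate no genuine obstacle here, since this is a corollary precisely because the heavy lifting resides in \emph{Theorem \ref{thm:region_stability}} and \emph{Lemma \ref{lem:Trans_sumprob}}. The only points requiring care are (i) confirming that the first term of $\beta_k\left(\mathbf{D}_k(n)\right)$ vanishes \emph{exactly} rather than merely being small, which hinges on $\mathbf{D}_k(n)$ being the exact negative Hessian block; and (ii) justifying that the matrix-2 norm of the diagonal product equals the maximum diagonal ratio, which relies on the positivity of the channel power gains $g_{kj}^{(s)}(n) > 0$.
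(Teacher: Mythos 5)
Your proposal is correct and follows essentially the same route as the paper: the paper obtains the corollary by noting that, with $\mathbf{D}_{k}(n) = -\partial_{kk}^{2}C_{k}(n)$, the contraction modulus $\beta_{k}\left(\mathbf{D}_{k}(n)\right)$ attains the lower bound of \emph{Lemma \ref{lem:Trans_sumprob}}, so the block-contraction condition $\max_{k}\beta_{k}\left(\mathbf{D}_{k}(n)\right) < 1$ reduces to \eqref{eqn:mse_prob_cor}. Your explicit diagonal computation showing the first term vanishes and each cross term equals $\max_{s}g_{kj}^{(s)}/g_{kk}^{(s)}$ simply fills in details the paper leaves implicit.
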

\begin{Rem}
The condition given in equation \eqref{eqn:mse_prob_cor} coincides with the condition given in Theorem 3 of \cite{Scutari2008_Unified} with the \emph{weight vector} chosen to be $\mathbf{1}$. Note that the condition given in \cite{Scutari2008_Unified} is for the \emph{iterative water-filling algorithm} (IWFA) \cite{Scutari2008II}, while {\em Corollary \ref{cor:new_condition}} here is concerned with the proposed DSGPA.
While the conclusion that the simultaneous DSGPA has similar convergence speed as the simultaneous IWFA given in \cite{Scutari2008II} is  based on numerical experiments, here we establish a theoretical foundation for that conclusion.
\end{Rem}

\textcolor{black}{\section{Numerical Results and Discussions}\label{sec:num_res_disc}}
In this section, we shall compare the proposed DSGPA with three baseline schemes: (I) Baseline $1$: gradient projection algorithm with a general positive definite scaling matrix (Gen-GPA); (II) Baseline $2$: gradient projection algorithm with a diagonal scaling matrix (Dia-GPA), whose diagonal entries are the diagonal entries of the corresponding Hessian matrix \cite{Xi2008, Stanczak2008, Bertsekas1989, Boyd2004}; (III) Baseline $3$: gradient projection algorithm with a constant stepsize (Con-GPA) $\xi = 0.005$ \cite{Bertsekas1989, Boyd2004, Scutari2008II, Zhang2008}. \textcolor{black}{We choose these three baselines as they have covered the majority of the existing approaches (baseline 2 and baseline 3) and also have an idea about what's the best possible performance (baseline 1). Specifically, baseline $1$ does not impose any {\em block diagonal} structure on the scaling matrices and requires centralized implementation. Hence, comparison with this baseline illustrates the potential performance loss of our proposed scheme due to the imposed block-diagonal structure in the scaling matrices. For baseline $2$ and baseline $3$, they are used  extensively  in \cite{Xi2008, Stanczak2008, Bertsekas1989, Boyd2004} and \cite{Bertsekas1989, Boyd2004, Scutari2008II, Zhang2008}, respectively.  Comparison with these baselines illustrates the performance improvement of the proposed scheme over these existing approaches. }

In all the simulations, there are $10$ randomly placed transmitter-receiver pairs, sharing $32$
independent subbands, i.e., it is chosen that $K = 10, N_{F}
= 32$. The total bandwidth is $10$ MHz. The maximum transmit power at the $k^{th}$ transmitter is set to be $1$ Watt, i.e., $P_{k, max} = 1, \forall k \in \mathcal{K}$. The distance from the $k^{th}$ transmitter
to the $j^{th} ~(\forall j \neq k)$ receiver is set to be $400$ meters, while the distance from the $k^{th}$ transmitter
to the $k^{th}$ receiver is set to be $100$ meters.  The path-loss exponent is $3.5$. The small scale fading channel gain is generated according to the distribution $\mathcal{CN}(0; 1)$. Moreover, we choose  $\widetilde{Q}_{kj}^{(s)} = 4$ for the
FSMC $\left\{h_{kj}^{(s)}(n)\right\}$, $\forall~ k, j \in
\mathcal{K}, s \in \mathcal{S}$ and the state space $\mathcal{\widetilde{H}}_{kj}^{(s)}$ of the FSMC $\left\{h_{kj}^{(s)}(n)\right\}$ is constructed based on the receiving SNR partition approach  \cite{Zhang2000, Wang1995}.

\textcolor{black}{\subsection{Tracking Performance Comparison}}\label{subsec:sim_tracking}
\textcolor{black}{Fig. \ref{fig:sim_1_tracking} illustrates the normalized sum-utility versus time-slot index for the proposed DSGPA and the three baseline schemes. As illustrated, the proposed DSGPA has a much better tracking capability than the baseline schemes Con-GPA and Dia-GPA, which are designed for quasi-static CSI. On the other hand, the DSGPA has similar performance as the centralized solution Gen-GPA, which shows that performance loss incurred by the block-diagonal structure of the scaling matrix used in the proposed DSGPA is negligible.}

\textcolor{black}{\subsection{Region Stability Property}}\label{subsec:sim_region}
\textcolor{black}{Fig. \ref{fig:sim_2_region} shows the simulation results of region stability. The simulation results are consistent with the analytical results stated in {\em Theorem \ref{thm:region_stability}}, i.e., the probability that the algorithm trajectory at steady state being out of the limit region $\mathcal{L}$ (see equation \eqref{eqn:limit_region}) is proportional to the normalized update interval $1/\overline{N}$.  Moreover, as the scaling matrices in the proposed DSGPA are adaptive to the time-varying CSI, the DSGPA performs better than the baseline schemes Con-GPA  and Dia-GPA . On the other hand, the performance of the DSGPA and the centralized  solution Gen-GPA  are similar.}

\textcolor{black}{\subsection{Order of Growth of the Tracking Errors}}\label{subsec:sim_errors}
\textcolor{black}{Fig. \ref{fig:sim_3_EAE} and Fig. \ref{fig:sim_4_MSE} show the simulation results of the expected-absolute-error (EAE) and the mean-square-error (MSE), respectively. Both figures are consistent with the analytical results given in {\em Theorem \ref{thm:UBD_MSE}}, i.e., the tracking errors, namely EAE and MSE, are proportional to the normalized update interval $1/\overline{N}$. Moreover, as the scaling matrices in the proposed DSGPA are adaptive to the time-varying CSI, the tracking errors associated with the  DSGPA are much smaller than the baseline schemes Con-GPA  and Dia-GPA . On the other hand, the performance  of the DSGPA and the centralized solution Gen-GPA  are quite similar.}

\section{Conclusions} \label{sec:conclusion}
In this paper, we have proposed a distributed scaled gradient projection algorithm (DSGPA) to solve the power control game in a $K$ pair multicarrier interference network under the finite-state Markov channel (FSMC) model.   We have shown that the proposed DSGPA converges to a limit region rather than a single point under the FSMC model. We have also shown that the order of growth of the tracking errors, namely the expected-absolute-error (EAE) and the mean-square-error (MSE), is given by $\mathcal{O}\left(1 \big/ \overline{N}\right)$.  By exploiting the Markovian property of the FSMC, the scaling matrix optimization problem (w.r.t. tracking error) is modeled as an infinite horizon average cost MDP. While there is no simple solution for MDP problems, we exploit the specific structure in the transition kernel and derive a low complexity distributive solution for controlling the scaling matrices to minimize the tracking errors. \textcolor{black}{Simulations are done to verify the analytical results as well as to demonstrate the superior performance of the proposed DSGPA over three baseline  schemes.}

\appendices
\textcolor{black}{\section{Proof of {\em Theorem \ref{thm:region_stability}}} \label{app:region_stability}
Consider a time interval $\left[0, ~N\right]$ with $M$ switchings, i.e.,  there are $M$ stages in the interval $\left[0,
~N\right]$. Let $\{q_{1}, q_{2}, \cdots, q_{M}\}$, $\{N_{1}, N_{2}, \cdots, N_{M}\}$ and $\{\phi_{1}, \phi_{2}, \cdots, \phi_{M}\}$ denote the channel states, sojourn times and contraction modulus of the $M$  stages, respectively, as shown in Fig. \ref{fig:stage}.  Under the conditions that $\big\{\beta_{k}\left(\mathbf{D}_{k}(n)\right) < 1,\forall~ k \in \mathcal{K}, \forall~ n \geq 1\big\}$, the iteration \eqref{eqn:system_iteration} is block-contraction in each stage \cite{Bertsekas1989, Scutari2008II}. As a result, the distances $\left\{ \left\|\mathbf{e}(m) \right\|_{\textmd{block}}, 1 \leq m \leq M \right\}$  between the iterate $\mathbf{p}(n)$ and the NE at the end of each stage can be upper bounded by
\begin{IEEEeqnarray}{rCl} \label{eqn:geo}
\left\|\mathbf{e}(1) \right\|_{\textmd{block} } &\leq& \mathbf{p}(0)\phi_{1}^{N_{1}}, \\
\left\|\mathbf{e}(m) \right\|_{\textmd{block} } &\leq& \left\|\mathbf{e}(m-1) \right\|_{\textmd{block} }\phi_{m}^{N_{m}} + \delta_{m-1, m}, ~2 \leq m \leq M, \label{eqn:em}
\end{IEEEeqnarray}
where $\delta_{m-1, m}$ denotes the distance between the NE of the $(m-1)^{th}$ stage and $m^{th}$ stage, i.e., the jump of the equilibrium point of the switched system. Iterating  equation \eqref{eqn:em} from $m = 1$ to $m = M$, we get
\begin{IEEEeqnarray}{rCl} \label{eqn:WNt}
\left\|\mathbf{e}(M) \right\|_{\textmd{block} } &\leq& \mathbf{p}(0) \prod_{m=1}^{M} \phi_{m}^{N_{m}} + \sum_{l = 2}^{M} \prod_{m=l}^{M} \delta_{m-1, m}\phi_{m}^{N_{m}} \notag \\
&\leq& \mathbf{p}(0) \beta^{\sum_{m=1}^{M}N_{m}} + \delta\sum_{l = 2}^{M} \prod_{m=l}^{M} \beta^{N_{m}}, \label{eqn:sumprod}
\end{IEEEeqnarray}
where $\beta = \max_{m}\phi_{m} = \max_{\{k,n\}}\beta_{k}\left(\mathbf{D}_{k}(n)\right)$ is the worst case contraction modulus.}

Let $\mathbf{q} = [q_{1} \:\ q_{2} \:\
\cdots \:\ q_{M}]$ and $\mathbf{N} = [N_{1} \:\ N_{2} \:\ \cdots \:\ N_{M}]$, then take expectation of both sides of equation \eqref{eqn:sumprod} w.r.t. $\left\{\mathbf{q}, \mathbf{N} \right\}$, we can get
\begin{IEEEeqnarray}{rCl} \label{eqn:iterating}
\mathbb{E}_{ \left\{\mathbf{q}, \mathbf{N} \right\} }\left\{ \left\|\mathbf{e}(M) \right\|_{\textmd{block} } \right\} \leq \mathbb{E}_{ \left\{\mathbf{q}, \mathbf{N} \right\} }\left\{\mathbf{p}(0) \beta^{\sum_{m=1}^{M}N_{m}} \right\} + \mathbb{E}_{ \left\{\mathbf{q}, \mathbf{N} \right\} }\left\{\delta\sum_{l = 2}^{M} \prod_{m=l}^{M} \beta^{N_{m}} \right\}.
\end{IEEEeqnarray}

Under the assumption that $\left[\mathbf{T}\right]_{qq} = \nu^{K^{2}N_{F}}, \forall~ q \in \mathcal{Q}$ (see equation \eqref{eqn:TPM}), then we know that $N_{1}, N_{2}, \cdots,$ $N_{M}$ are identically distributed with probability mass function (PMF) given by
\begin{IEEEeqnarray}{rCl} \label{eqn:pmf}
Pr\left\{ N_{m} = l \right\} = \nu^{K^{2}N_{F}(l-1)}\left(1-\nu^{K^{2}N_{F}}\right), ~\forall~ l = 1, 2, \cdots .
\end{IEEEeqnarray}

Then, we have
\begin{IEEEeqnarray}{rCl} \label{eqn:expectation}
\mathbb{E}_{ \left\{\mathbf{q}, \mathbf{N} \right\} }\left\{\delta\sum_{l = 2}^{M} \prod_{m=l}^{M} \beta^{N_{m}} \right\} = \mathbb{E}_{ \left\{\mathbf{q}\right\} }\left\{   \mathbb{E}_{ \left\{\mathbf{N}\right\} } \left\{  \delta\sum_{l = 2}^{M} \prod_{m=l}^{M} \beta^{N_{m}} \bigg| \mathbf{q} \right\}  \right\} = \frac{\delta\alpha\left(1-\alpha^{M-1}\right)}{1-\alpha},
\end{IEEEeqnarray}
where $\alpha = \mathbb{E}\left\{ \beta^{N_{m}} \right\} = \frac{ \beta \left(1-\nu^{K^{2}N_{F}}\right) }{ 1 - \beta \nu^{K^{2}N_{F}} } = \frac{\beta}{\beta + (1-\beta)\overline{N}}$. Therefore, let $M \rightarrow + \infty$, we get
\begin{IEEEeqnarray}{rCl} \label{eqn:Minfty}
\mathbb{E}_{ \left\{\mathbf{q}, \mathbf{N} \right\} }\left\{ \left\|\mathbf{e}(\infty) \right\|_{\textmd{block}} \right\} \leq \frac{\delta \alpha}{1-\alpha} = \frac{\delta\beta}{(1-\beta)\overline{N}}.
\end{IEEEeqnarray}

Then, by virtue of the Markov inequality  we get
\begin{IEEEeqnarray}{rCl} \label{eqn:MarkovIe}
Pr\left\{ \left\|\mathbf{e}(\infty) \right\|_{\textmd{block} }  > \delta\right\} \leq
\frac{\beta}{(1-\beta)\overline{N}},
\end{IEEEeqnarray}
which means
\begin{IEEEeqnarray}{rCl} \label{eqn:region_prob_app}
\lim_{n \rightarrow +\infty }Pr\left\{\mathbf{p}(n) \notin \mathcal{L}\right\} \leq \min_{}\left\{1, \frac{\beta}{(1-\beta)\overline{N}}\right\}.
\end{IEEEeqnarray}

The the sojourn time of  the FSMC $\left\{\mathbf{h}(n)\right\}$ is geometrically distributed with parameter $\nu^{K^{2}N_{F}}$, and the \emph{average sojourn time} of  the FSMC $\left\{\mathbf{h}(n)\right\}$ is given by $\frac{1}{1-\nu^{K^{2}N_{F}}}$.

\section{ Proof of { \em Theorem \ref{thm:UBD_MSE} } } \label{app:UBD_MSE}
From equation \eqref{eqn:Minfty} in {\em Appendix  \ref{app:region_stability}} we know that, at steady state (i.e., when $n \rightarrow + \infty$),
\begin{IEEEeqnarray}{rCl} \label{eqn:steady_state_EAE}
\mbox{EAE}\left(\mathbf{p}(n)\right) &\leq&  \frac{ \delta\beta}{(1-\beta)\overline{N}}.
\end{IEEEeqnarray}

Let $\varpi =  \mathbb{E}\left\{ \beta^{2N_{m}} \right\} = \frac{ \beta^{2} \left(1-\nu^{K^{2}N_{F}}\right) }{ 1 - \beta^{2} \nu^{K^{2}N_{F}} } = \frac{\beta^{2}}{\beta^{2} + (1-\beta^{2})\overline{N}}$,  we then have
\begin{IEEEeqnarray}{rCl} \label{eqn:MSE_sq}
&&\mathbb{E}_{ \left\{\mathbf{q}, \mathbf{N} \right\} }\left\{ \left( \delta\sum_{l = 2}^{M} \prod_{m=l}^{M} \beta^{N_{m}} \right)^2 \right\} = \mathbb{E}_{ \left\{\mathbf{q}\right\} }\left\{   \mathbb{E}_{ \left\{\mathbf{N}\right\} } \left\{  \left( \delta\sum_{l = 2}^{M} \prod_{m=l}^{M} \beta^{N_{m}} \right)^2\bigg| \mathbf{q} \right\}  \right\} \notag \\
&&= \frac{2\varpi}{1-\varpi}\left[\frac{1-\alpha^{M-1}}{1-\alpha}  - \frac{\varpi^{M} - \varpi\alpha^{M-1}}{\varpi-\alpha}\right] - \frac{\varpi(1-\varpi^{M-1})}{1-\varpi}.
\end{IEEEeqnarray}

Combining equation \eqref{eqn:iterating} and equation \eqref{eqn:MSE_sq}, and let $M \rightarrow + \infty$, we can get
\begin{IEEEeqnarray}{rCl} \label{eqn:app_UBD_MSE}
\mbox{MSE}\left(\mathbf{p}(n)\right) &\leq& \frac{ \delta^{2}\beta^{2}\left(2\beta + (1-\beta) \overline{N}\right) }{(1-\beta^{2})(1-\beta)\overline{N}^{2}}.
\end{IEEEeqnarray}

\section{ Proof of {\em Lemma \ref{lem:dominanting_error}} } \label{app:dominanting_error}
Under the conditions that $\big\{\beta_{k}\left(\mathbf{D}_{k}(n)\right) < 1,\forall~ k \in \mathcal{K}, \forall~ n \geq 1\big\}$, the iteration \eqref{eqn:system_iteration} is block-contraction in each stage \cite{Bertsekas1989, Scutari2008II}. From $\widetilde{e}(1) = e(1)$, we can get
\begin{IEEEeqnarray}{rCl} \label{eqn:e1}
e(1+) \leq e(1)\beta_{1}^{N_{1}} = \widetilde{e}(1)\beta_{1}^{N_{1}},
\end{IEEEeqnarray}
where $e(1+)$ denotes the distance between the algorithm trajectory and the NE at the end of  stage 1. Since $e(2) \leq e(1+) + \delta_{1, 2}$, we then have
\begin{IEEEeqnarray}{rCl} \label{eqn:e2}
e(2) \leq e(1)\beta_{1}^{N_{1}} + \delta_{1, 2} = \widetilde{e}(1)\beta_{1}^{N_{1}} + \delta_{1, 2} = \widetilde{e}(2).
\end{IEEEeqnarray}

Repeat the same procedure for each stage (i.e., $m = 3, 4, \cdots$), we can get
\begin{IEEEeqnarray}{rCl} \label{eqn:error_em}
e(m) \leq \widetilde{e}(m), \forall~ m = 1, 2, \cdots .
\end{IEEEeqnarray}

Moreover, set $N_{m} = 1, \forall~ m$, in equation \eqref{eqn:WNt} and let $M \rightarrow + \infty$, we can get $\widetilde{e}(m) \leq \frac{\delta\beta}{1-\beta}$. Since the above arguments hold for all realizations of the FSMC $\{\mathbf{h}(n)\}$, equation \eqref{eqn:almost_sure} holds almost surely.

\section{ Proof of {\em Theorem \ref{thm:solution_mdp}} } \label{app:solution_mdp}
Define $P\left( \bm{\chi}_{lr} \big| \bm{\chi}_{iq}, \pi\left( \bm{\chi}_{iq} \right) \right) \triangleq Pr\left\{ \bm{\chi}(n+1) = (\bar{e}_{l}, \mathbf{h}_{r}) \big| \bm{\chi}(n) = (\bar{e}_{i}, \mathbf{h}_{q}),   \pi\left( \bm{\chi}(n)\right)  \right\}$, the optimal control policy $\pi^{*}$ can be found by solving the Bellman equation below\cite{Bertsekas2005}:
\begin{IEEEeqnarray}{rCl} \label{eqn:Bellman}
V\left( \bm{\chi}_{iq} \right) + \theta = \minimize{\pi\left( \bm{\chi}_{iq} \right)}\left\{  g\left(\bar{e}_{i}\right) + \sum_{\bm{\chi}_{lr}} P\left( \bm{\chi}_{lr} \big| \bm{\chi}_{iq}, \pi\left( \bm{\chi}_{iq} \right) \right) V\left( \bm{\chi}_{lr} \right) \right\},
\end{IEEEeqnarray}
\textcolor{black}{where $V\left( \bm{\chi}_{iq} \right)$ is called the optimal value function at state $\bm{\chi}_{iq}$; and  $\theta$ is the optimal average cost. The Bellman equation is a fixed point equation w.r.t. the variables $\left\{V\left( \bm{\chi}_{iq} \right), 1\leq i,q \leq Q\right\}$ and $\theta$, which is derived using the principle of {\em divide-and-conquer} from the original MDP in \eqref{eqn:MDPformulation}. It is shown using the theory of  MDP \cite{Bertsekas2005} that the fixed point solution $\theta$ of \eqref{eqn:Bellman} would give the optimal value of $J^{\pi}$ in \eqref{eqn:MDPformulation} and the optimal control policy $\pi^{*}$ is given by the solution of the RHS of \eqref{eqn:Bellman} w.r.t. the fixed point solution $\left\{V\left( \bm{\chi}_{iq} \right), 1\leq i,q \leq Q\right\}$. For the transition kernel $P\left( \bm{\chi}_{lr} \big| \bm{\chi}_{iq}, \pi\left( \bm{\chi}_{iq} \right) \right)$ given in \eqref{eqn:MDPkernel}, we have  $\frac{ \bar{e}_{l} - \delta_{qr} }{ \bar{e}_{i} } < 1$, since $\bar{e}_{l} \leq \bar{e}_{i} [\beta(n)]^{N_{q}} + \delta_{qr} < \bar{e}_{i} + \delta_{qr}$, where $N_{q}$ denotes the sojourn time of state $q$ of the FSMC $\{\mathbf{h}(n)\}$. As a result, minimizing $\beta(n)$ can simultaneously minimize all the transition probabilities $P\left( \bm{\chi}_{lr} \big| \bm{\chi}_{iq}, \pi\left( \bm{\chi}_{iq} \right) \right), \forall~ 1 \leq l \leq L, 1 \leq r \leq Q$, which suggests that the optimal control policy $\pi^{*}$ is to take the minimum possible $\beta(n)$ at state $\bm{\chi}_{iq}, \forall~ 1 \leq i \leq L, 1 \leq q \leq Q$. Moreover, since $\beta(n) = \max_{\{\forall~k \in \mathcal{K}\}} \left\{\beta_{k}\left(\mathbf{D}_{k}(n)\right) \right\}$, the minimum possible $\beta(n)$ is given by solving the optimization problem given in equation \eqref{eqn:mse_prob} at each time-slot.}

\section{ Proof of {\em Lemma \ref{lem:Trans_sumprob}} } \label{app:Trans_sumprob}
To prove {\em Lemma \ref{lem:Trans_sumprob}}, we need the the following intermediate results.
\begin{Lem} [Matrix Induced-2 Norm of Product of Two Matrices] \label{lem:eigenvalue_AB}
     For two symmetric positive definite matrices $\mathbf{A}, \mathbf{B} \in \mathbb{C}^{m \times m}$, we have
     \begin{IEEEeqnarray}{rCl} \label{eqn:eigenvalue_lemma}
     \left\|\mathbf{AB}\right\|_{2} \geq \lambda_{min}\left(\mathbf{A}\right)\left\|\mathbf{B}\right\|_{2},
     \end{IEEEeqnarray}
     where $\lambda_{min}\left(\mathbf{A}\right)$ denotes the minimum eigenvalue of matrix $\mathbf{A}$.
     \end{Lem}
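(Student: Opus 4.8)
The plan is to exploit the variational characterization of the matrix-2 norm given in Definition \ref{eqn:matrixnorm}, namely $\|\mathbf{M}\|_{2} = \max_{\|\mathbf{x}\|_{2}=1}\|\mathbf{M}\mathbf{x}\|_{2}$, by testing $\mathbf{AB}$ against one well-chosen unit vector rather than computing the maximum directly. Since $\mathbf{B}$ is (Hermitian) positive definite, its 2-norm equals its largest eigenvalue, $\|\mathbf{B}\|_{2} = \lambda_{max}(\mathbf{B})$, and this value is attained at the corresponding unit eigenvector, which I will call $\mathbf{x}^{*}$. The idea is that feeding $\mathbf{x}^{*}$ through the product first ``extracts'' the full norm of $\mathbf{B}$, after which the action of $\mathbf{A}$ can only shrink the result by at most a factor $\lambda_{min}(\mathbf{A})$.

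First I would let $\mathbf{x}^{*}$ be the unit eigenvector of $\mathbf{B}$ associated with $\lambda_{max}(\mathbf{B})$, so that $\mathbf{B}\mathbf{x}^{*} = \|\mathbf{B}\|_{2}\,\mathbf{x}^{*}$. Using the variational definition as a lower bound on the max, I then write
\begin{IEEEeqnarray}{rCl} \notag
\left\|\mathbf{AB}\right\|_{2} \;\geq\; \left\|\mathbf{AB}\mathbf{x}^{*}\right\|_{2} \;=\; \left\|\mathbf{A}\left(\|\mathbf{B}\|_{2}\,\mathbf{x}^{*}\right)\right\|_{2} \;=\; \left\|\mathbf{B}\right\|_{2}\left\|\mathbf{A}\mathbf{x}^{*}\right\|_{2},
\end{IEEEeqnarray}
where the last equality uses homogeneity of the vector-2 norm. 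It then remains to lower-bound $\|\mathbf{A}\mathbf{x}^{*}\|_{2}$ by $\lambda_{min}(\mathbf{A})$.

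For this auxiliary bound I would invoke the standard fact that for a Hermitian positive definite $\mathbf{A}$ and any vector $\mathbf{x}$ one has $\|\mathbf{A}\mathbf{x}\|_{2} \geq \lambda_{min}(\mathbf{A})\|\mathbf{x}\|_{2}$. I would justify it by expanding $\mathbf{x}$ in an orthonormal eigenbasis $\{\mathbf{v}_{i}\}$ of $\mathbf{A}$ with eigenvalues $\{\lambda_{i}\}$, writing $\mathbf{x} = \sum_{i}c_{i}\mathbf{v}_{i}$, so that $\|\mathbf{A}\mathbf{x}\|_{2}^{2} = \sum_{i}\lambda_{i}^{2}|c_{i}|^{2} \geq \lambda_{min}^{2}(\mathbf{A})\sum_{i}|c_{i}|^{2} = \lambda_{min}^{2}(\mathbf{A})\|\mathbf{x}\|_{2}^{2}$. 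Applying this to $\mathbf{x}^{*}$ (a unit vector) gives $\|\mathbf{A}\mathbf{x}^{*}\|_{2} \geq \lambda_{min}(\mathbf{A})$, and chaining this with the displayed inequality yields $\|\mathbf{AB}\|_{2} \geq \lambda_{min}(\mathbf{A})\|\mathbf{B}\|_{2}$, as claimed.

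I do not expect a genuine obstacle here, since the argument is short and self-contained. The only point requiring mild care is the interpretation of ``symmetric positive definite'' in the complex setting: I would read it as Hermitian positive definite, which guarantees real positive eigenvalues, an orthonormal eigenbasis, and the identities $\|\mathbf{B}\|_{2} = \lambda_{max}(\mathbf{B})$ and $\lambda_{min}(\mathbf{A}) > 0$ that the proof relies on. In the application of interest the relevant matrices (the negative Hessians $-\partial_{kk}^{2}C_{k}(n)$) are in fact real diagonal, so these hypotheses are automatically met.
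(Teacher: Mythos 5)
Your proof is correct, and it is cleaner than the one in the paper. You test $\mathbf{AB}$ against the single unit vector $\mathbf{x}^{*}$ achieving $\|\mathbf{B}\|_{2}$, pull out the scalar $\|\mathbf{B}\|_{2}$ by homogeneity, and finish with the elementary eigenbasis bound $\|\mathbf{A}\mathbf{x}\|_{2}\geq\lambda_{min}(\mathbf{A})\|\mathbf{x}\|_{2}$; every step is justified. The paper instead works with squared quantities: it asserts that $\left(\mathbf{A}^{2}-\lambda_{min}^{2}(\mathbf{A})\mathbf{I}_{m}\right)\mathbf{B}^{2}$ is positive semidefinite (a claim that is delicate as stated, since a product of two Hermitian PSD matrices is generally not even Hermitian, and which is really only needed for the specific vector $\mathbf{b}$, where it holds because $\mathbf{b}$ is an eigenvector of $\mathbf{B}^{2}$), and then chains quadratic forms of the non-Hermitian matrix $\mathbf{A}^{2}\mathbf{B}^{2}$ evaluated at the top eigenvectors $\mathbf{a}$ of $\mathbf{A}^{2}\mathbf{B}^{2}$ and $\mathbf{b}$ of $\mathbf{B}^{2}$. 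The underlying idea is the same in both arguments --- plug in the extremal eigenvector of $\mathbf{B}$ and let $\lambda_{min}(\mathbf{A})$ absorb the action of $\mathbf{A}$ --- but your formulation avoids the squares, avoids reasoning about eigenvectors and quadratic forms of a non-Hermitian product, and makes the inequality $\mathbf{a}^{\dag}\mathbf{A}^{2}\mathbf{B}^{2}\mathbf{a}\geq\mathbf{b}^{\dag}\mathbf{A}^{2}\mathbf{B}^{2}\mathbf{b}$ (which the paper leaves implicit) transparent, since it is exactly $\|\mathbf{AB}\|_{2}\geq\|\mathbf{AB}\mathbf{x}^{*}\|_{2}$. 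Your closing remark on reading ``symmetric positive definite'' over $\mathbb{C}$ as Hermitian positive definite is also the right reading, and is what both proofs implicitly require.
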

     \begin{Proof}
Given that $\mathbf{A}$ is positive definite, it then follows that $\mathbf{A^{2}} - \lambda_{min}^{2}\left(\mathbf{A}\right)\mathbf{I}_{m}$ is positive semidefinite. As $\mathbf{B}$ is also positive definite, then $\left(\mathbf{A^{2}} - \lambda_{min}^{2}\left(\mathbf{A}\right)\mathbf{I}_{m}\right)\mathbf{B}^{2}$ is positive semidefinite \cite{Horn1985}. Denoting $\mathbf{a} \in \mathbb{C}^{m}$ and $\mathbf{b} \in \mathbb{C}^{m}$ as the \emph{unit norm}  eigenvectors corresponding to the \emph{largest} eigenvalues of matrices $\mathbf{A}^{2}\mathbf{B}^{2}$ and $\mathbf{B}^{2}$, respectively, we have
\begin{IEEEeqnarray}{rCl} \label{eqn:aAbB}
\left\|\mathbf{AB}\right\|_{2}^{2} =  \mathbf{a}^{\dag}\mathbf{A}^{2}\mathbf{B}^{2}\mathbf{a} \geq \mathbf{b}^{\dag}\mathbf{A}^{2}\mathbf{B}^{2}\mathbf{b} \geq \lambda_{min}^{2}\left(\mathbf{A}\right)\mathbf{b}^{\dag}\mathbf{B}^{2}\mathbf{b}  = \lambda_{min}^{2}\left(\mathbf{A}\right)\left\|\mathbf{B}\right\|_{2}^{2}.
\end{IEEEeqnarray}
Therefore, from the definition of matrix-2 norm \cite{Horn1985}, we get $\left\|\mathbf{AB}\right\|_{2} \geq \lambda_{min}\left(\mathbf{A}\right)\left\|\mathbf{B}\right\|_{2}$.
\end{Proof}

We next proceed to prove {\em Lemma \ref{lem:Trans_sumprob}}.
Define $\mathbf{\tilde{F}}(n) = -diag\left(\left[ \frac{ g_{kj}^{(1)}(n) }{ g_{kk}^{(1)}(n) } \:\ \frac{ g_{kj}^{(2)}(n) }{ g_{kk}^{(2)}(n)} \:\ \cdots \:\ \frac{ g_{kj}^{(N_{F})}(n) }{ g_{kk}^{(N_{F})}(n) } \right] \right)$ and $\mathbf{\widetilde{D}}_{k}(n) = -\mathbf{D}_{k}^{-1}(n) \partial_{kk}^{2}C_{k}(n) $, we then have
\begin{IEEEeqnarray}{rCl} \label{eqn:gDi}
\beta_{k}\left(\mathbf{D}_{k}(n)\right) =      \left\|\mathbf{I}_{N_{F}} -  \mathbf{\widetilde{D}}_{k}(n) \right\|_{2} + \sum_{j =1, j \neq k}^{K} \left\| \mathbf{\widetilde{D}}_{k}(n)\mathbf{\widetilde{F}}(n)    \right\|_{2}.
\end{IEEEeqnarray}

Moreover, from {\em Lemma \ref{lem:eigenvalue_AB} } we know that
\begin{IEEEeqnarray}{rCl} \label{eqn:gDiLower}
\beta_{k}\left(\mathbf{D}_{k}(n)\right) &\geq& \left|1 - \lambda_{min}\left(\mathbf{\widetilde{D}}_{k}(n)\right)\right| + \lambda_{min}\left(\mathbf{\widetilde{D}}_{k}(n)\right) \sum_{j =1, j \neq k}^{K} \left\| \mathbf{\widetilde{F}}(n)    \right\|_{2} \label{eqn:bound1} \\
&\stackrel{(a)}{\geq}& 1 + \lambda_{min}\left(\mathbf{\widetilde{D}}_{k}(n)\right) \left( \sum_{j =1, j \neq k}^{K} \left\| \mathbf{\widetilde{F}}(n)    \right\|_{2} - 1 \right) \stackrel{(b)}{\geq} \sum_{j =1, j \neq k}^{K} \left\| \mathbf{\widetilde{F}}(n)    \right\|_{2}.
\end{IEEEeqnarray}
where (a) is because  $\lambda_{min}\left(\mathbf{\widetilde{D}}_{k}\right) \leq 1$ achieves smaller $\beta_{k}\left(\mathbf{D}_{k}(n)\right)$ than $\lambda_{min}\left(\mathbf{\widetilde{D}}_{k}\right) > 1$; and  (b) is because $\beta_{k}\left(\mathbf{D}_{k}(n)\right) < 1$ and  choosing $\lambda_{min}\left(\mathbf{\widetilde{D}}_{k}\right) = 1$ is the best possible choice.

\section{ Proof of {\em Theorem \ref{thm:optimal_soln}} } \label{app:optimal_soln}
Similar to equation \eqref{eqn:gDiLower}, we can also get
\begin{IEEEeqnarray}{rCl} \label{eqn:optSOLN}
\beta\left(\mathbf{D}_{k}(n)\right) &\geq& \left| \lambda_{max}\left(\mathbf{\widetilde{D}}_{k}(n)\right) - 1 \right| + \lambda_{min}\left(\mathbf{\widetilde{D}}_{k}(n)\right) \sum_{j =1, j \neq k}^{K} \left\| \mathbf{\widetilde{F}}(n)    \right\|_{2}.
\end{IEEEeqnarray}

Equation \eqref{eqn:bound1} together with equation \eqref{eqn:optSOLN} imply that we shall choose $\lambda_{max}\left(\mathbf{\widetilde{D}}_{k}(n)\right) = \lambda_{min}\left(\mathbf{\widetilde{D}}_{k}(n)\right) = 1$, which achieves the lower bound of $\beta\left(\mathbf{D}_{k}(n)\right)$. Therefore, one of the optimal solutions of {\em Problem \ref{prob:mse_submini}} is given by $\mathbf{\widetilde{D}}_{k}(n) = \mathbf{I}_{N_{F}}$, i.e., $\mathbf{D}_{k}(n) = -\partial_{kk}^{2}C_{k}(n)$.

\bibliographystyle{IEEEtran}
\bibliography{IEEEabrv,references_cited}

\begin{table}[h]
\centering \caption{\textcolor{black}{Summary of Main Notations}}\label{table:notations}
\begin{tabular} {|l|l|}
\hline \textbf{Notation} & \textbf{Meaning} \\
\hline $n$ & an index variable to denote the time-slot\\
\hline $m$ & an index variable to denote the stage\\
\hline $h_{kj}^{(s)}(n)$ & channel gain between the $k^{th}$
receiver and the $j^{th}$ transmitter on  the $s^{th}$ subband \\
\hline $\mathbf{h}(n)$ & collection of the channel coefficients of the entire network \\
\hline $\mathbf{D}_{k}(n)$  & scaling matrix of the $k^{th}$ transmitter at time-slot $n$\\
\hline $\mathbf{D}(n)$  & block-diagonal matrix, which consists of the $K$ scaling matrices $\left\{\mathbf{D}_{k}(n), \forall k \in \mathcal{K}\right\}$ \\
\hline $\beta_{k}\left(\mathbf{D}_{k}(n)\right)$ & contraction modulus of the $k^{th}$ transmitter at time-slot $n$  \\
\hline $\beta$ & worse case contraction modulus of all transmitters in all channel states (i.e., all time-slots) \\
\hline $\phi_{m}$ & worse case contraction modulus of all transmitters in the $m^{th}$stage\\
\hline $N_{q}$ & random variable, which is the sojourn time of channel state $q$ \\
\hline $N_{m}$ & random variable, which is the sojourn time of stage $m$ \\
\hline $\bar{\mathbf{p}}^{(q)}$  & NE of Game $\mathcal{G}$ at channel state $q$ \\
\hline $\delta$  & maximum distance between two NEs corresponding to two different channel states \\
\hline $\mathcal{L}$ & limit region, which is a polyhedron with diameter $\delta$ \\
\hline $\left\{\widetilde{e}(m)\right\}$ & dominated error process\\
\hline $\delta_{m, m+1}$ & distance between the two NEs corresponding to the $m^{th}$ stage and the $(m+1)^{th}$ stage  \\
\hline $\pi$ & scaling matrix control policy\\
\hline $\bm{\chi}(n)$ & system state at the $n^{th}$ time-slot\\
\hline $\left\{\bm{\chi}(n)\right\}$ & Markov chain induced by the control policy $\pi$\\
\hline
\end{tabular}
\end{table}

\begin{figure} [h]
\centering
\includegraphics[scale=0.7]{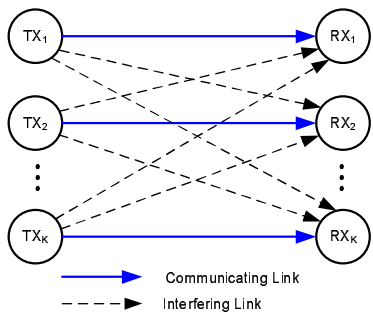}
\caption {A sample multicarrier interference network with $K$ transmitter-receiver
pairs, where the $k^{th}$ transmitter wishes to communicate with the $k^{th}$ receiver, $\forall~ k \in \mathcal{K}$. All the $K$ transmitters share $N_{F}$ nonoverlapping subcarriers.} \label{fig:system_model}
\end{figure}

\begin{figure} [h]
\centering
\includegraphics[scale=0.7]{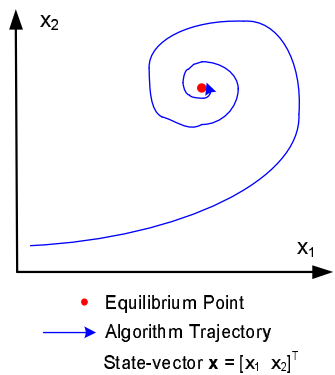}
\caption {A pictorial view of an algorithm trajectory converging to the static  Nash Equilibrium  point in a two-dimensional algorithm space.} \label{fig:trajectory}
\end{figure}

\begin{figure} [h]
\centering
\includegraphics[scale=0.6]{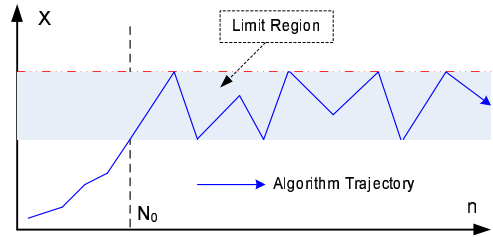}
\caption {An illustration of the region stability of a randomly switched system with one-dimensional state space \cite{Podelski_Proof}. From time-slot $N_{0}$ and onwards, the trajectory remains in the limit region with probability $P_{Region}$ given in equation \eqref{eqn:region_prob}.} \label{fig:region_stability}
\end{figure}

\begin{figure} [h]
\centering
\includegraphics[scale=0.65]{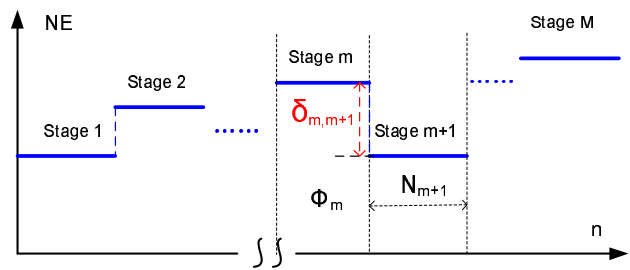}
\caption {An illustration of the $M$ stages that the switched system \eqref{eqn:system_iteration} has gone through. $\phi_{m}$ and $N_{m}$ denote the contraction modulus and the sojourn time of the $m^{th}$ stage, respectively; and  $\delta_{m, m+1}$ denotes the distance between the NE of the $m^{th}$ stage and the  $(m+1)^{th}$ stage, $\forall~ m = 1, 2, \cdots .$} \label{fig:stage}
\end{figure}

\newpage
\begin{figure} [h]
\centering
\includegraphics[scale=0.65]{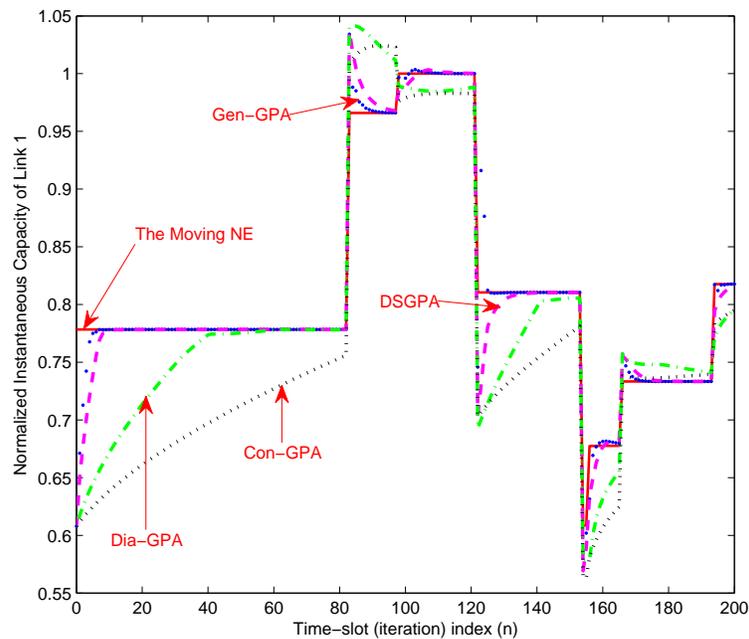}
\caption {\textcolor{black}{Tracking performance comparison of the proposed DSGPA and the baseline schemes: Gen-GPA, Dia-GPA and Con-GPA. The link capacity is normalized to the maximum link capacity obtained across all the time-slots.
}}\label{fig:sim_1_tracking}
\end{figure}

\newpage
\begin{figure} [h]
\centering
\includegraphics[scale=0.75]{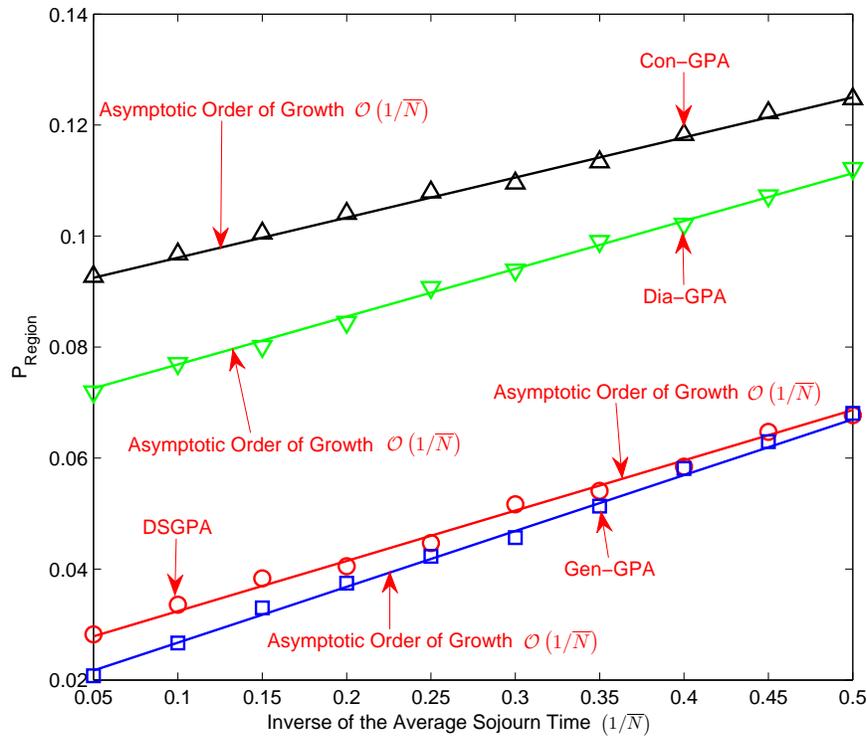}
\caption {\textcolor{black}{Region stability property of the proposed DSGPA and the baseline schemes: Gen-GPA, Dia-GPA and Con-GPA.
The asymptotic order of growth refers to the results stated in {\em Theorem \ref{thm:region_stability}} (see equation \eqref{eqn:region_prob}).
}} \label{fig:sim_2_region}
\end{figure}

\newpage
\begin{figure} [h]
\centering
\includegraphics[scale=0.75]{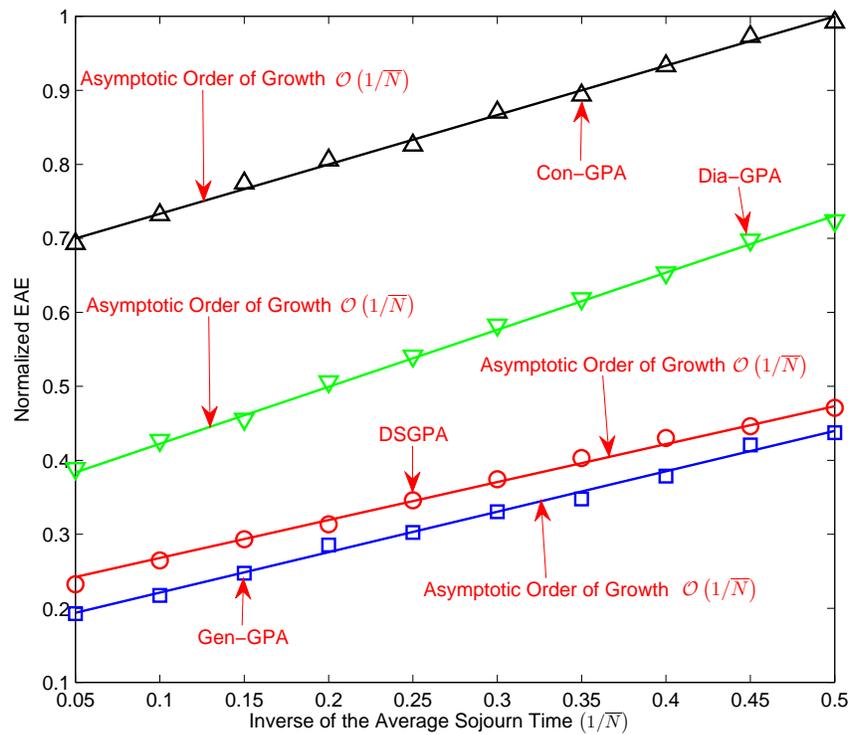}
\caption {\textcolor{black}{Expected-absolute-error (EAE) versus Inverse of the Average Sojourn Time of the proposed DSGPA and the baseline schemes: Gen-GPA, Dia-GPA and Con-GPA.
The EAE is normalized to the maximum EAE of all the schemes.
}} \label{fig:sim_3_EAE}
\end{figure}

\newpage
\begin{figure} [h]
\centering
\includegraphics[scale=0.75]{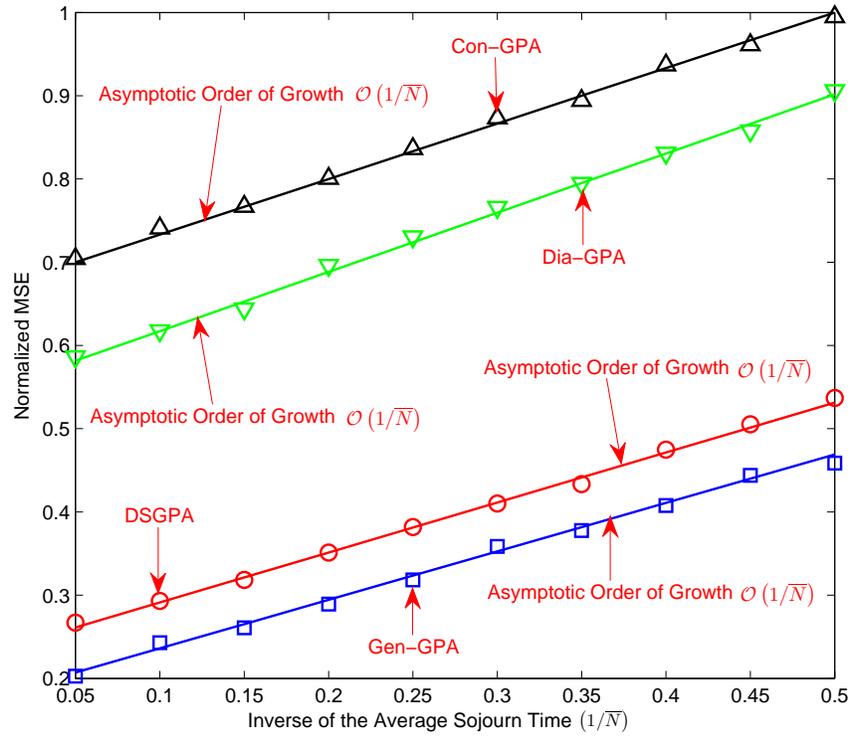}
\caption {\textcolor{black}{Mean-square-error (MSE) versus Inverse of the Average Sojourn Time of the proposed DSGPA and the baseline schemes: Gen-GPA, Dia-GPA and Con-GPA.
The MSE is normalized to the maximum MSE of all the schemes.
}} \label{fig:sim_4_MSE}
\end{figure}

\end{document}